\let\c@theorem\undefined
\let\c@lemma\undefined
\let\c@definition\undefined
\newtheorem{definition}{Definition}
\newtheorem{remark}{Remark}
\newtheorem{theorem}{Theorem}
\newtheorem{lemma}{Lemma}
\newtheorem{note}{Note}
\newtheorem{corollary}{Corollary}
\newcommand{\bL}{\mathbf{L}}
\newcommand{\bD}{\mathbf{D}}
\newcommand{\cB}{{\cal B}}
\newcommand{\cC}{{\cal C}}
\newcommand{\cI}{{\cal I}}
\newcommand{\cM}{{\cal M}}
\newcommand{\cN}{{\cal N}}
\newcommand{\cP}{{\cal P}}
\newcommand{\cR}{{\cal R}}
\newcommand{\cT}{{\cal T}}
\newcommand{\cU}{{\cal U}}
\newcommand{\cV}{{\cal V}}
\newcommand{\cW}{{\cal W}}
\newcommand{\be}{\mathbf{e}}
\newcommand{\bu}{\mathbf{u}}
\newcommand{\CNOT}{\mathrm{CNOT}}
\newcommand{\ident}{\mathds{1}}
\newcommand{\dnorm}[1]{{\ensuremath{\|#1\|_\diamond}}}
\DeclareMathOperator{\tr}{Tr}
\DeclareMathOperator{\trans}{Trans}
\DeclareMathOperator{\cnot}{CNOT}
\title{Fault-tolerant quantum computation \\ with constant overhead for general noise}
\author{Matthias Christandl$^1$, Omar Fawzi$^2$, and Ashutosh Goswami$^1$ \\[2mm]
    {\small $^1$Department of Mathematical Sciences, University of Copenhagen, Denmark} \\
    {\small $^2$Universit\'e de Lyon, Inria, ENS de Lyon, UCBL, LIP, France}\\[0.5em]
  {\small christandl@math.ku.dk, omar.fawzi@ens-lyon.fr, akg@math.ku.dk}}
\date{}
\begin{document}

\title{Fault-tolerant quantum computation \\ with constant overhead for general noise}

\maketitle
\begin{abstract}
Fault-tolerant quantum computation traditionally incurs substantial resource overhead, with both qubit and time overheads scaling polylogarithmically with the size of the computation. While prior work by Gottesman showed that constant qubit overhead is achievable under stochastic noise using quantum low-density parity-check (QLDPC) codes, it has remained an open question whether similar guarantees hold under more general, non-stochastic noise models. In this work, we address this question by considering a general circuit-level noise model defined via the diamond norm, which captures both stochastic and non-stochastic noise, including coherent and amplitude damping noise. We prove that constant qubit overhead fault-tolerant quantum computation is achievable in this general setting, using QLDPC codes with constant rate and linear minimum distance. To establish our result, we develop a fault-tolerant error correction scheme and a method for implementing logic gates under general circuit noise. These results extend the theoretical foundations of fault-tolerant quantum computation and
offer new directions for fault-tolerant architectures under realistic noise models.
\end{abstract}

\maketitle

\section{Introduction}
Fault tolerance is essential for large-scale quantum computations due to the inherently noisy nature of quantum systems~\cite{preskill1998fault, gottesman2010introduction}. A leading approach to fault-tolerant quantum computation is based on active error correction, where logical qubits encoded in the code space of an error-correcting code are created using many noisy physical qubits. Using fault-tolerant procedures, one can realize logic gates in the code space of the error-correcting code. The celebrated threshold theorem states that by doing active error correction, it is possible to realize arbitrarily long finite quantum computations with classical input and output, given that the noise rate in the quantum circuit is below a threshold value~\cite{aharonov1997fault, kitaev, Aliferis2006quantum}.   
Fault-tolerance, however, comes at a price of increased qubit count compared to the noiseless protocol. More precisely, to realize a quantum circuit, one replaces it with a fault-tolerant quantum circuit, which operates on a much larger number of qubits. Therefore, it is of interest from a practical standpoint to minimize the qubit overhead, that is, the ratio of the number of qubits in fault-tolerant circuit and original circuit.

\smallskip Fault-tolerant quantum computing based on concatenated codes, which were the first error-correcting codes used to prove the threshold theorem, have a polylogarithmic asymptotic qubit overhead~\cite{aharonov1997fault, kitaev, Aliferis2006quantum}, meaning that the ratio of physical to logical qubits grows polylogarithmically with the size of the quantum computation. This naturally raises the question whether fault-tolerant quantum computation is possible with a qubit overhead that remains constant with the problem size. For stochastic noise, this question was partially settled in~\cite{gottesman2013fault}, where a construction of constant overhead fault-tolerant quantum computation was provided using quantum low-density parity-check (QLDPC) codes.
The construction in~\cite{gottesman2013fault} was based on the assumption that a family of QLDPC codes with certain properties exists. More precisely, it supposes that there exists a family of constant-rate QLPDC codes that can suppress random errors with noisy syndrome extraction and that furthermore have an efficient classical decoding algorithm. It was then shown in~\cite{fawzi2020constant} that quantum expander codes satisfy all the requirements needed in Gottesman's construction; hence, establishing constant overhead fault tolerance. It is important to note that Gottesman's construction leads to an increased overhead in depth due to a sequentialization step requiring that only a constant number of logic gates are applied at each time step. In particular, the overhead in depth is polynomial. In~\cite{yamasaki2024time}, a different approach based on concatenated codes is used to show constant overhead fault-tolerant quantum computation, where the time overhead is quasi polylogarithmic (see also Note~\ref{note}).  These results~\cite{gottesman2013fault, fawzi2020constant, yamasaki2024time} have generated significant interest in constant-overhead quantum computation, particularly those based on QLDPC codes. Subsequent works have focused on optimizing these constant-overhead constructions for practical implementation~\cite{xu2024constant, bravyi2024high, cowtan2025parallel, bonilla2025constant, yoshida2025concatenate}.

\smallskip 
However, all the previous works on constant overhead fault-tolerant quantum computation~\cite{gottesman2013fault, fawzi2020constant, yamasaki2024time, xu2024constant,bravyi2024high, cowtan2025parallel, bonilla2025constant, yoshida2025concatenate} consider \emph{stochastic} circuit noise (e.g., Pauli and local stochastic noise), 
leaving the question open whether it is possible to achieve constant qubit overhead under more realistic non-stochastic circuit noise (e.g., coherent and amplitude damping noise). Moreover, since the threshold theorem based on concatenated codes also applies to non-stochastic circuit noise~\cite{kitaev, aharonov1997fault}, it is natural to ask whether such generalizations also hold for constant overhead constructions. 

\smallskip In this work, we consider a general circuit noise, which includes both coherent noise and stochastic noise, as special cases. In particular, general noise with parameter $\delta$ replaces each gate $g$ in the circuit by an arbitrary quantum channel $\tilde{g}$, which is $\delta$-close to $g$ in diamond norm, that is, $\dnorm{g - \tilde{g}} \leq \delta$~\cite{kitaev}. We show that constant overhead can be achieved for general circuit noise using QLDPC codes that have constant rate and linear minimum distance. It's worth noticing that linear minimum distance is not a requirement for stochastic circuit noise~\cite{gottesman2013fault, fawzi2020constant}. However, recent breakthroughs have demonstrated the existence of such QLDPC codes~\cite{panteleev2022asymptotically, leverrier2022quantum}; therefore, our results do not rely on any unproven assumptions. 

\smallskip To establish constant overhead fault-tolerant quantum computation for general circuit noise, we revisit the construction from~\cite{gottesman2013fault, fawzi2020constant}, and develop necessary tools and generalizations to establish constant overhead fault-tolerant quantum computation for general circuit noise. In particular, the following two points are necessary for constructions in~\cite{gottesman2013fault, fawzi2020constant}:
\begin{itemize}
    \item[1] Fault-tolerant error correction for QLPDC codes can be done by measuring stabilizer generators of the code and performing error correction based on the measurement outcomes (syndrome).

    \item[2] The logic gates can be implemented using gate teleportation, where the required ancilla state in the code space can be prepared using concatenated codes.
\end{itemize}
\begin{figure}
\centering
\begin{quantikz}[row sep=0.5cm, column sep=0.2cm]
 & \ctrl{2} & \qw & \qw & \qw & \ctrl{1}  & \qw  & \qw & \qw & \qw & \qw & \qw & \qw \\
 & \qw  & \ctrl{2} & \qw & \qw & \targ{} & \qw & \qw & \qw & \qw & \qw & \qw & \qw\\
 & \targ{} & \qw & \qw & \qw &  \ctrl{2}  & \qw  & \qw & \qw & \qw & \ctrl{1} & \qw & \qw  \rstick[5]{future light cone of the \\ highlighted CNOT gate}  \\
 & \qw & \targ{} & \qw & \qw & \qw & \qw & \qw & \qw & \qw & \targ{} & \qw & \qw \\
& \ctrl{1} \gategroup[2,steps=1,style={dashed,rounded
corners,fill=blue!20, inner
xsep=2pt},background,label style={label
position=above,anchor=north,yshift=0.2cm}]{} & \qw  & \qw & \qw & \targ{} & \qw & \qw & \qw & \qw & \ctrl{1} & \qw & \qw\\
 & \targ{} & \qw  &  \qw & \qw & \qw & \qw & \qw & \qw & \qw & \targ{} & \qw & \qw 
\end{quantikz}
\caption{The figure illustrated the future light cone of the highlighted CNOT gate, i.e., the set of qubits that are affected by the $\cnot$ gate.}
\label{fig:light-cone}
\end{figure}
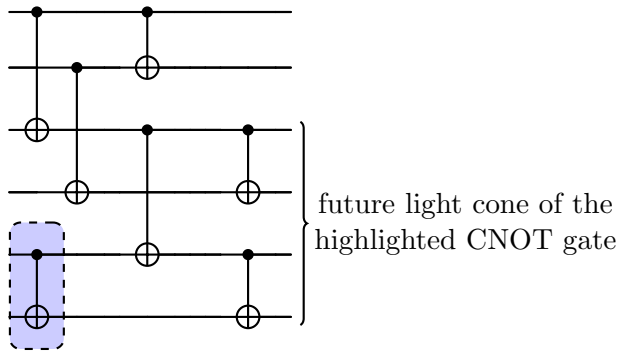
We generalize the above two points for general circuit noise, considering QLDPC codes with constant overhead and linear minimum distance.

\smallskip For the first point, we consider parallel error correction of several code blocks of QLDPC codes and 
analyze the syndrome extraction circuit for each code block under general circuit noise. 

\smallskip We divide the syndrome extraction $\Phi_{\mathrm{syn}}$ circuit in three parts, (1) initialization of ancilla qubits in a Pauli-$X$ or $Z$ basis state, (2) application of CNOT or CZ gates to entangle data qubits with ancilla qubits, (3) Measurement of the ancilla qubits in Pauli-$X$ or $Z$ basis (see also Fig.~\ref{fig:syn-ext-circuit}). We show that the errors that happen during $\Phi_{\mathrm{syn}}$ can be pushed just before the last layer of the circuit, where measurements are applied. Note that errors multiply as they are pushed since they can propagate through two qubit gates. However, error propagation is limited since $\Phi_{\mathrm{syn}}$ has constant depth (with respect to the code length) for QLDPC codes, implying that the future light cone of any gate only contains a constant number of qubits (see Fig.~\ref{fig:light-cone} for an illustration of the future light cone and Lemma~\ref{lem:pushing-errs}, where a detailed analysis of $\Phi_{\mathrm{syn}}$ under general circuit noise is given). 

\smallskip We then combine our analysis of the syndrome extraction circuit with the single-shot classical decoder introduced in~\cite{gu2024single} to show that the parallel error correction of several code blocks controls the errors in each code block, so that the computation can be sustained for arbitrarily long time (see Theorem~\ref{thm:cir-noise-ftec} for more details). It is worth emphasizing that parallel error correction is necessary since in our construction logical qubits are encoded in many smaller code blocks of QLDPC codes and not just in one big code block, similarly to~\cite{gottesman2013fault, fawzi2020constant}(see also Fig.~\ref{fig:layout-sim-circ}). We note that, in this regard, our analysis improves previous works~\cite{gottesman2013fault, fawzi2020constant}, where error correction of only one code block was considered, thereby ignoring correlations that arise between different blocks. 

\smallskip For the second point, we construct fault-tolerant logic gates by combining a fault-tolerant state preparation method for general circuit noise given in~\cite{christandl2024fault}, together with gate teleportation. By integrating our generalizations of fault-tolerant error correction and logic gates within the constructions of~\cite{gottesman2013fault,fawzi2020constant}, we establish the following theorem.
\begin{theorem} [Informal, see Theorem~\ref{thm:main-cons-over} for details] \label{thm:main-cons-over-informal}
Consider a quantum circuit $\Phi$, with classical input and output, working on $x$ qubits, and of size (i.e., total number of locations) $\mathrm{poly}(x)$. For any asymptotic overhead $\alpha > 1$, there exists a constant error rate $\delta_{th} > 0$, and a quantum circuit $\overline{\Phi}$, with the same input and output systems as $\Phi$, and working on $x' = \alpha x + o(x)$ qubits, such that any noisy version $[\overline{\Phi}]_{\delta}$, under general circuit noise with parameter $\delta \leq \delta_{th}$, simulates $\Phi$ with a vanishing error probability as $x$ goes to infinity.  
\end{theorem}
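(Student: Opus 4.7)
First I would construct the simulator $\overline{\Phi}$ along the lines of~\cite{gottesman2013fault, fawzi2020constant}. Start by sequentializing $\Phi$ so that at each time step only a constant number of logic gates act; this inflates the depth to $\mathrm{poly}(x)$ but keeps the total number of locations polynomial. Encode the $x$ logical qubits into $K$ blocks of a constant-rate asymptotically-good QLDPC code from~\cite{panteleev2022asymptotically, leverrier2022quantum} with block length $n$ and linear minimum distance. Choosing the code rate $R > 1/\alpha$ ensures $Kn \leq \alpha x$. The simulator $\overline{\Phi}$ then alternates (i) one sequentialized logical time step, implemented via gate teleportation with ancilla states prepared off-line by the fault-tolerant concatenated-code scheme of~\cite{christandl2024fault}, and (ii) a round of \emph{parallel} syndrome extraction $\Phi_{\mathrm{syn}}$ on every code block, followed by classical decoding and a Pauli correction.

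To handle noise with $\dnorm{g - \tilde g} \leq \delta$, I would apply Kitaev's convex decomposition $\tilde g = (1-\delta)\, g + \delta\, E_g$ at every location, rewriting $[\overline{\Phi}]_\delta$ as a convex sum over \emph{fault patterns} $F$ in which an arbitrary channel $E_g$ replaces $g$ on $F$ and $g$ itself acts elsewhere. The contribution of patterns whose restriction to any space-time window exceeds a small critical density is diamond-norm negligible by a binomial tail. For patterns below this density, the light-cone analysis of Lemma~\ref{lem:pushing-errs} pushes all inserted channels to the end of $\Phi_{\mathrm{syn}}$, where they act as a Pauli error with weight at most a constant factor larger than $|F|$; Theorem~\ref{thm:cir-noise-ftec}, together with the single-shot decoder of~\cite{gu2024single}, then keeps every block within the decoding radius and restores it to the code space up to a sparse residual Pauli. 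For the gate-teleportation steps, the state preparation guarantee of~\cite{christandl2024fault} produces each ancilla state with error exponentially small in its concatenation level, so each teleported logic gate fails only with a weight-bounded effective error.

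A union bound over the $\mathrm{poly}(x)$ space-time locations against the per-block failure probability $\exp(-\Omega(n))$ is summable once $n = \Theta(\log x)$, which contributes only an $o(x)$ additive term to the qubit count; the concatenated magic-state factories of~\cite{christandl2024fault} contribute a further $o(x)$ qubits, yielding $\alpha x + o(x)$ physical qubits in total. The distance between the ideal output distribution of $\Phi$ and the distribution produced by $[\overline{\Phi}]_\delta$ is then bounded by (a) the heavy-fault-pattern tail, exponentially small in $n$, plus (b) the decoder failure probability, also exponentially small in $n$; both vanish once $\delta \leq \delta_{th}$ for a suitable constant $\delta_{th} > 0$.

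The main obstacle, and the reason the stochastic arguments of~\cite{gottesman2013fault, fawzi2020constant} do not transfer directly, is the parallel FTEC step: under coherent or amplitude-damping noise, faults at distinct locations inside $\Phi_{\mathrm{syn}}$ and across distinct blocks are not independent random events, so random-decoder guarantees cannot be invoked as stated. Reducing to fault patterns via Kitaev's decomposition restores effective independence at the cost of diamond-norm cross-terms, but one still has to track the interplay between (i) the constant light-cone expansion of faults inside the syndrome circuit, (ii) the cross-block correlations generated by the inter-block CNOTs used in gate teleportation, and (iii) the linear distance required for the single-shot decoder to tolerate these amplified errors simultaneously in all blocks. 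Lemma~\ref{lem:pushing-errs} and Theorem~\ref{thm:cir-noise-ftec} are precisely the tools needed to disentangle these three effects, and they are where the bulk of the technical work sits.
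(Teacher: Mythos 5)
Your overall architecture matches the paper's: sequentialize $\Phi$, encode into constant-rate QLDPC blocks, implement gates by teleportation with ancillas prepared via the concatenated-code scheme of~\cite{christandl2024fault}, and interleave parallel single-shot error correction analyzed through Lemma~\ref{lem:pushing-errs} and Theorem~\ref{thm:cir-noise-ftec}. However, there is a genuine gap at the very step you identify as the crux. You propose to handle the noise by ``Kitaev's convex decomposition $\tilde g = (1-\delta)\,g + \delta\,E_g$'' with $E_g$ an arbitrary \emph{channel}. No such decomposition exists for general diamond-norm noise: for a coherent over-rotation $\tilde g(\rho) = e^{i\theta Z}\rho e^{-i\theta Z}$ the superoperator $(\tilde g - (1-\delta)\,\mathcal{I})/\delta$ is not completely positive for any $\delta<1$, so the fault patterns cannot be given a probabilistic interpretation. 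Asserting this decomposition silently reduces general noise to stochastic noise, which is exactly the reduction the theorem is meant to avoid. The paper instead uses the \emph{linear} expansion $\tilde{\cT}_{ij} = \cT_{ij} + (\tilde{\cT}_{ij}-\cT_{ij})$, where the error terms are superoperators of diamond norm at most $\delta$ but are not channels; the ``fault patterns'' are then terms of a linear combination whose heavy tail is controlled in diamond norm (Lemma~\ref{lem:bound-adv}). A downstream consequence you miss is that after the light-cone pushing the residual error is \emph{not} a Pauli error but a general superoperator of weight $O(n\delta)$; one must therefore generalize single-shot decoding to arbitrary low-weight linear operators (Corollary~\ref{cor:single-shot-dec}), and it is precisely here that the condition $|E|_{\mathrm{red}} < d_{\min}/2$, hence \emph{linear minimum distance}, is needed to guarantee that all Pauli components consistent with a given syndrome are stabilizer-equivalent. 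This is the reason linear distance is a requirement for general noise but not for stochastic noise, and your proposal does not supply this argument.

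A secondary, quantitative issue: you take $n=\Theta(\log x)$. The per-round error in Theorem~\ref{thm:cir-noise-ftec} is $h\kappa e\exp(-n\delta/3)$, which must be summed over $\mathrm{poly}(x)$ time steps and over $h = x/m$ blocks; with $n=\Theta(\log x)$ this gives only $x^{-O(\delta)}$ per round, which does not vanish against $\mathrm{poly}(x)$ rounds unless the constant in $n$ is tuned to exceed the (unknown a priori) polynomial degree divided by $\delta_{th}$. The paper sidesteps this by taking $m=O(\sqrt{x})$, so the failure probability $\exp(-O(\sqrt{x}\,\delta))$ beats any polynomial while the ancilla-preparation cost $O(\sqrt{x}\,\mathrm{polylog}(x/\epsilon))$ remains $o(x)$.
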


 \begin{figure}[!t]
    \centering
\begin{tikzpicture}

\tikzset{
  half circle/.style={
      semicircle,
      shape border rotate=90,
      anchor=chord center
      }
}

\draw 
(0.0, 0) node[inner xsep=0.45cm, inner ysep=0.5cm, draw, fill=blue!20] (a1) {EC}
(0.0, -2.5) node[inner xsep=0.45cm, inner ysep=0.5cm, draw, fill=blue!20] (b1) {EC}
;

\draw
($(a1.west) + (-0.6, 0.4)$) node[](w1){} 
($(a1.west) + (-0.6, -0.4)$) node[](w2){}  
(w1) to (w1  -| a1.west)
(w2) to (w2  -| a1.west)
;

\draw
($(b1.west) + (-0.6, 0.4)$) node[](w3){} 
($(b1.west) + (-0.6, -0.4)$) node[](w4){}  
(w3) to (w3  -| b1.west)
(w4) to (w4  -| b1.west)
;

\draw 
(2.0, 0) node[inner xsep=0.45cm, inner ysep=0.5cm, draw, fill=blue!20] (a2) {EC}
(2.0, -2.5) node[inner xsep=0.45cm, inner ysep=0.5cm, draw, fill=blue!20] (b2) {EC}
;

\draw 
(6, 0) node[inner xsep=0.4cm, inner ysep=0.5cm, draw, fill=blue!20] (a3) {EC}
(6, -2.5) node[inner xsep=0.4cm, inner ysep=0.5cm, draw, fill=blue!20] (b3) {EC}
;

\draw
(w1 -| a1.east) to (w1 -| a2.west)
(w2 -| a1.east) to (w2 -| a2.west)
(w1 -| a2.east) to ++(0.5, 0)++ (0.9,0) node[](){$\cdots$} 
(w2 -| a2.east) to ++(0.5, 0) ++ (0.9,0) node[](){$\cdots$} 
(w1 -| a3.west) to ++(-0.5, 0) 
(w2 -| a3.west) to ++(-0.5, 0)
(w3 -| b1.east) to (w3 -| b2.west)
(w4 -| b1.east) to (w4 -| b2.west)
(w3 -| b2.east) to  ++(0.5, 0) ++ (0.9,0) node[](){$\cdots$} 
(w4 -| b2.east) to  ++(0.5, 0) ++ (0.9,0) node[](){$\cdots$} 
(w3 -| b3.west) to ++(-0.5, 0) 
(w4 -| b3.west) to ++(-0.5, 0) 
;

\draw[dotted, thick]
($(a1.west) + (-0.25, 0)$) node[rotate = 90](){$\cdots$} 
($(b1.west) + (-0.25, 0)$) node[rotate = 90](){$\cdots$} 
;

\draw 
(8.5, 0) node[inner xsep=0.45cm, inner ysep=0.5cm, draw, fill=blue!20] (a4) {EC}
(8.5, -3.25) node[inner xsep=0.45cm, inner ysep=1.5cm, draw, fill=blue!20] (b4) {GT}
;

\draw
(-0.5, 0.4) node[] (w1) {}
(-0.5, -0.4) node[] (w2) {}
(w1 -| a3.east) to (w1 -| a4.west)
(w1 -| a4.east) to ++ (0.5, 0)
(w2 -| a3.east) to (w2 -| a4.west)
(w2 -| a4.east) to ++ (0.5, 0)
;

\draw
(-0.5, -2.1) node[] (w3) {}
(-0.5, -2.9) node[] (w4) {}
(w3 -| b3.east) to (w3 -| b4.west)
(w3 -| b4.east) to ++ (0.5, 0)
(w4 -| b3.east) to (w4 -| b4.west)
(w4 -| b4.east) to ++ (0.5, 0)
;

\draw
(1.4, -6.0) node[inner xsep=0.3cm, inner ysep=2.0cm, draw, fill=red!20] (p1) {}
(6, -6.0) node[inner xsep=0.3cm, inner ysep=1.0cm, draw, fill=red!20] (p2) {}
($0.5*(p1) + 0.5*(p2)$) node[inner xsep=0.3cm, inner ysep=1.5cm, draw, fill=red!20] (p3) {}
(p1) node[rotate = 90](){}
(p2) node[rotate = 90](){}
(p3) node[rotate = 90](){}

($(p1.west) + (-1.0, 1.9)$) node[draw, half circle, fill=red!20] (i1){}
($(p1.west) + (-1.0, -1.9)$) node[draw, half circle, fill=red!20] (i2){}
($0.5*(i1) + 0.5*(i2) + (0.07, 0)$) node[rotate = 90](){}
($(i1) + (0, -1.0)$) node[rotate = 90](){$\dots$}
($(i2) + (0, 1.0)$) node[rotate = 90](){$\dots$}
($0.6*(p1) + 0.4*(p3)$) node[](){$\cdots$}
($0.6*(p3) + 0.4*(p2)$) node[](){$\cdots$}
($(p3.west) + (-.5, 1.3)$) to ($(p3.west) + (0, 1.3)$)
($(p3.west) + (-.5, -1.3)$) to ($(p3.west) + (0, -1.3)$)
($(p2.west) + (-.5, -0.9)$) to ($(p2.west) + (0, -0.9)$)
($(p2.west) + (-.5, 0.9)$) to ($(p2.west) + (0, 0.9)$)
($(p2.east) + (0, 0.9)$) to ($(p2.east) + (0.3, 0.9)$) to ++(0, 1.6) node[inner sep = 0, outer sep = 0](y1){}
($(p2.east) + (0, -0.9)$) to ($(p2.east) + (0.6, -0.9)$) to ++(0, 2.2) node[inner sep = 0, outer sep = 0](y2){}
(y1) to (y1 -| b4.west)
(y2) to (y2 -| b4.west)
($(p3.west) + (-0.25, 0.8)$) node[rotate = 90](){$\cdots$}
($(p3.west) + (-0.25, 0.0)$) node[rotate = 90](){$\cdots$}
($(p3.west) + (-0.25, -0.8)$) node[rotate = 90](){$\cdots$}
($(p2.west) + (-0.25, 0.35)$) node[rotate = 90](){$\cdots$}
($(p2.west) + (-0.25, -0.35)$) node[rotate = 90](){$\cdots$}
;

\draw
(i1) to (i1 -| p1.west)
(i2) to (i2 -| p1.west)
;

\draw

;

\draw[dashed, thick] 
(7.3, 1.5) to (7.3, -8.4) 
;

\draw
(3, 1.2) node[](){{\bf Ancilla preparation}}
(9.3, 1.2) node[](){{\bf Gate teleportation}}
;

\draw[decorate, decoration = brace]
($(i2) + (-1.2, -0.2)$) to ($(i1) + (-1.2, 0.2)$)  
;

\draw[decorate, decoration = brace]
(-1.3, -3.6) to ++(0, 4.2)
;

\draw
($(i1) + (-1.4, 0.2)$) node[left, rotate = 90](){Ancilla preparation circuit}
;

\draw
(-1.55, 0.2) node[left, rotate = 90](){Data code blocks}
;

\end{tikzpicture}
\caption{This figure illustrates the realization of a logic gate in the fault-tolerant circuit $\overline{\Phi}$ corresponding to $\Phi$ according to Theorem~\ref{thm:main-cons-over-informal}.  The blocks corresponding to EC (error correction) and GT (gate teleportation) represent code blocks of the QLDPC codes in $\overline{\Phi}$. The size of each block is $m = O(\sqrt{x})$, where $x$ is the number of qubits in $\Phi$. The logic gate is realized in two steps, namely ancilla preparation and gate teleportation steps. During the ancilla preparation, a code state is prepared, using state preparation method from Theorem~\ref{thm:state-prep}. Note that the \emph{decreasing size} of the layers in the ancilla preparation circuit reflect the fact that extra qubits are needed during ancilla preparation. The qubit overhead in ancilla preparation circuit alone is in fact not constant (it is polylogarithmic). While the ancilla state is prepared, repeated rounds of error correction are applied on the data code blocks. During the gate teleportation, the prepared ancilla state is transferred to the specified code block, and the logic gate is realized using gate teleportation. On the remaining code blocks error correction is applied.
}
\label{fig:layout-sim-circ}
\end{figure}
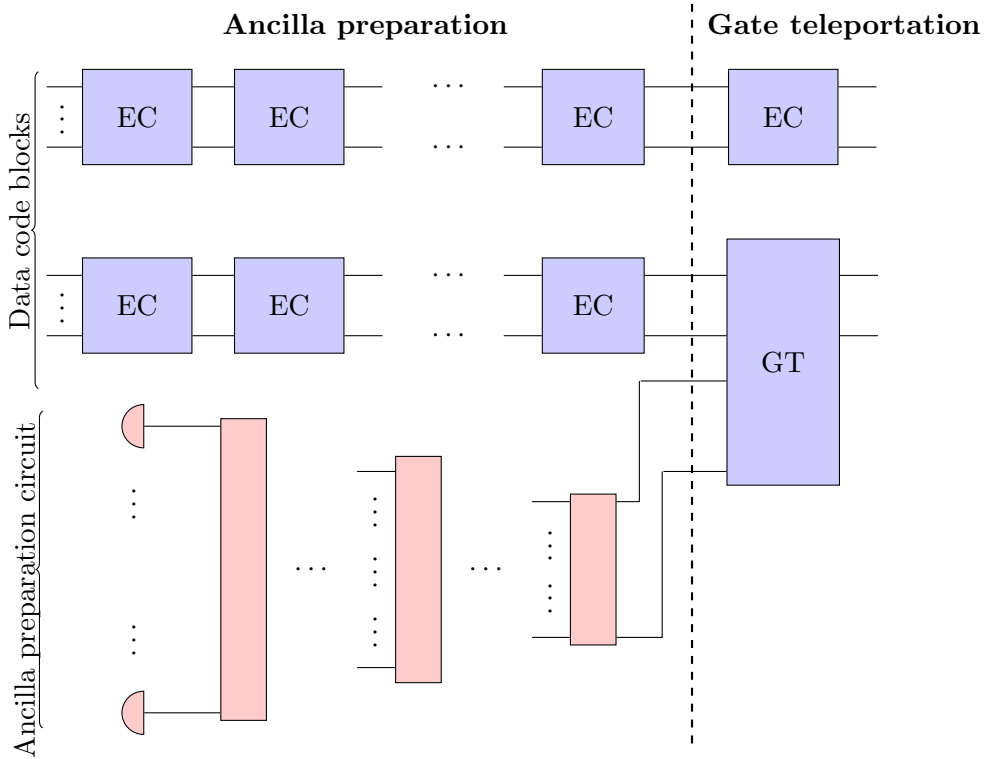

To construct the fault-tolerant circuit $\overline{\Phi}$ according to Theorem~\ref{thm:main-cons-over-informal}, we consider Gottesman's fault-tolerant scheme with constant overhead~\cite{gottesman2013fault, fawzi2020constant}. Gottesman's construction involves two different types of codes:

\begin{itemize}
    \item[$(i)$] QLDPC codes, which store the logical qubits on which quantum computation is performed.
    \item[$(ii)$] Concatenated codes, which are used only for preparing certain entangled ancillary states in the QLDPC code space. These states are used for fault-tolerant logic gate implementation on QLDPC code via gate teleportation.
\end{itemize}
We note that the overhead in ancillary state preparation using concatenated codes is not constant, i.e., it grows with code size. To make sure, that qubit overhead remains constant with respect to $x$, i.e., the number of qubits in $\Phi$, a sequentialization technique is used where $x$ logical qubits are encoded into multiple blocks of a QLDPC code and logical gate is applied on only one (or a small number) of these blocks (see also Fig.~\ref{fig:layout-sim-circ}).

\smallskip In our generalization of Gottesman's protocol for general noise models, we assume the availability of a single-shot error correction for the QLDPC code, with constant-time complexity. However, single-shot error correction is not supposed for the concatenated codes. Fault-tolerant preparation of ancillary states using concatenated codes has been analyzed in~\cite{christandl2024fault}. The state preparation procedure is relatively slow, in the sense that its time complexity grows with the size of the QLDPC code. Nevertheless, this does not conflict with the use of single-shot error correction on QLDPC code. This is because while the ancillary state is being prepared, repeated rounds of error correction are applied on the QLDPC code (see also Fig.~2). These repeated rounds of error correction ensure that logical information remains protected in QLDPC code throughout the preparation process.

Below, we briefly describe the construction of the fault-tolerant quantum circuit $\overline{\Phi}$ according to Theorem~\ref{thm:main-cons-over-informal}.

We divide $x$ qubits of circuit $\Phi$ into $x/m$ blocks, each containing $m = O(\sqrt{x})$ qubits.  We suppose a sequential realization of $\Phi$, where at each layer (time step), only one non-trivial gate is applied on a selected set of qubits, and the idle gate is applied on the remaining qubits. To construct $\overline{\Phi}$,  we consider a QLDPC code that encodes $m$ logical qubits in $n$ physical qubits. We replace each block of $\Phi$, containing $m$ qubits, by $n$-qubits corresponding to the QLDPC code. Then $\overline{\Phi}$ is obtained by replacing  non-trivial gates in the sequential circuit $\Phi$ by the corresponding logic gate based on gate teleportation. The idle gates are replaced by multiple rounds of single-shot error correction, taking into account the time-complexity of ancillary state preparation within gate teleportation. The realization of a logic gate through gate teleportation is illustrated in Fig.~\ref{fig:layout-sim-circ}. Note that logic gates in $\overline{\Phi}$ need to be applied sequentially due to the fact that the qubit overhead in ancilla preparation is not constant but poly logarithmic (see also Fig.~\ref{fig:layout-sim-circ}). This means that for one non-trivial logic gate, we need $O(\sqrt{x} \: \mathrm{polylog} \:\frac{x}{\epsilon})$ ancilla qubits, where $\epsilon$ is the accuracy with which we wish to realize $\Phi$. The sequential implementation ensures that ancilla qubits can be reused or reinitialized, hence keeping the total qubit overhead constant. Applying parallel logic gates would mean preparing ancilla states in parallel, hence blowing up the overhead with respect to $x$.

The remaining sections have been organized as follows. In Section~\ref{sec:prelim}, we give basic definitions and review the state preparation theorem from~\cite{christandl2024fault}. In Section~\ref{sec:proof}, we provide a proof of Theorem~\ref{thm:main-cons-over-informal}, assuming a definition of fault-tolerant error correction. Later in Section~\ref{sec:single-shot-err}, we use the single shot decoding from~\cite{gu2024single} to show that the definition of fault-tolerant error correction is satisfied for QLDPC codes with constant rate and linear minimum distant, assuming general circuit noise during syndrome extraction.
\begin{note} \label{note}
The present manuscript constitutes part of the arXiv submission~\cite{christandl2024fault}. We note that shortly after our work was posted, two related works~\cite{nguyen2024quantum, tamiya2024polylog} appeared on arXiv, showing that for local stochastic noise, the time overhead can be further reduced to polylogarithmic, while still maintaining constant qubit overhead.
\end{note}
\section{Preliminary} \label{sec:prelim}
In this section, we provide some basic definitions, outline the circuit model of quantum computation and describe general circuit noise, used to derive the results in this paper. Moreover, we review the state preparation theorem from~\cite{christandl2024fault}, which will be used later in the proof of constant overhead fault-tolerant quantum computation.

\paragraph{Notation.} We use symbols $M, L, N$ for finite dimensional Hilbert spaces, and $\bL(M, N)$ for linear operators from $M$ to $N$ or simply $\bL(M)$ for linear operators on $M$. We use $\bD(M)$ for quantum states on $M$. A qubit Hilbert space is denoted by $\mathbb{C}^2$. Quantum channels are denoted by the calligraphic letters $\cT, \cP, \cR, \cN, \cV, \cW$, etc. The identity operator is denoted by $\ident$ and the identity quantum channel by $\cI$.

\subsection{Basic definitions}

\medskip Let $[n] := \{1, \dots, n\}$ be a set of labels for $n$-qubits. For any $A \subseteq [n]$, we define the space of errors with support on $A$ as the space of linear operators acting only on qubits in $A$, i.e.,

\begin{equation} \label{eq:map-A}
  \mathscr{E}(A) := \{ (\otimes_{i \in [n] \setminus A} \ident_i) \otimes E_A : E_A \in \bL(M^{\otimes |A|})\} \subseteq \bL(M^{\otimes n}),
\end{equation}
 Using $\mathscr{E}(A)$, we define the space of operators with \emph{weight} $z \leq n$ qubits as follows.
\begin{definition}[Weight of an operator] \label{def:weight-z}
The space of operators with weight $z \leq n$ is defined as
   \begin{equation} \label{eq:weight-z}
    \mathscr{E}(n, z) := \mathrm{span}\{ E : E \in \mathscr{E}(A) \text{ for some } A \subseteq [n] \text{ s.t. } |A| \leq z \}
\end{equation}
\end{definition}
\begin{definition}[Weight of a superoperator] \label{def:weight}
 The space of superoperators with weight $z \leq n$ is defined as, 
\begin{equation} \label{eq:weight-z-sup}
  \mathscr{E}(n,z) (\cdot) \mathscr{E}(n, z)^*  = \mathrm{span}\{ E_1 (\cdot) E_2 : E_1 \in \mathscr{E}(n,z), E_2 \in \mathscr{E}(n,z)^*\}.
\end{equation}   
\end{definition}
 We note that in Def.~\ref{def:weight}, the operators $E_1$ and $E_2$ both have weight $z$ but they may be supported on completely different subset of the $n$ qubits. We have adopted this broader definition so as not to restrict error maps to quantum channels, but rather consider them as general superoperators. This broader definition simplifies the analysis of fault-tolerant error correction, under general noise models. 

\smallskip We will use the diamond norm as a measure of distance between two superoperators~\cite{kitaev, watrous2009semidefinite, kretschmann2008information}.
\begin{definition}[Diamond norm]
 For any superoperator $\cT: \bL(M) \to \bL(N)$, the diamond norm $\dnorm{T}$ is defined as,
\begin{equation}
    \dnorm{\cT} := \sup_{G} \| \cI_G \otimes \cT  \|_1,
\end{equation}
where $G$ is a reference system and the one-norm is given by,
\begin{equation}
 \| \cT  \|_1   := \sup_{\substack{\| \rho\|_1 \leq 1 \\ \rho \in \bL(M)}}  \| \cT(\rho) \|_1. 
\end{equation}    
\end{definition}

\subsection{Quantum circuits and noise model}
We consider joint quantum computation on classical and quantum bits. Any quantum computation is performed using a quantum circuit, which is constructed from a gate set acting on classical and quantum bits. We consider a gate set $\mathbf{A}$, containing a universal set of gates on classical bits, and a set of quantum gates, including state preparation, quantum measurement, a universal set of  unitary gates, and the corresponding classically controlled unitary gates.

\smallskip As in~\cite[Section~2]{christandl2024fault}, we consider a hybrid model of a quantum gate, acting on classical and quantum bits. We divide classical bits as control bits and operational bits. Let $l$ be the number of control bits and $o, o'$ be the number of operational bits at the input and output, respectively. Let $q, q'$ be the number of qubits at the input and output, respectively. For $\bu \in \{0, 1\}^l$, define a classical linear map $\mathcal{B}^{\bu}: \mathbb{B}^{\otimes l} \to \mathbb{B}^{\otimes l}$, $\mathcal{B}^{\bu} (\cdot) :=  \proj{\bu}(\cdot) \proj{\bu}$. Then, the gate $g$ realizes the quantum channel
\begin{equation} \label{eq:realization-qgate}
\cT_g := \sum_{\mathbf{u} \in \{0, 1\}^l} \mathcal{B}^\mathbf{u}  \otimes \cT^\mathbf{u},
\end{equation}
where $\cT^{\bu}$ is a quantum channel from $ \mathbb{B}^{\otimes o} \otimes  \bL((\mathbb{C}^2)^{\otimes q})$ to $  \mathbb{B}^{\otimes o'} \otimes \bL((\mathbb{C}^2)^{\otimes q'})$. Here, $\cT_g$ signifies transformation associated with the gate $g$.

\smallskip We consider a definition of quantum circuit based on layered implementation of gates, as given in~\cite{kitaev}(see also~\cite[Section~2]{christandl2024fault}).

\begin{figure}[t!]
\centering
\begin{quantikz}[row sep=1.0em, column sep=1em, wire types={q,c}]
\qw & \gate{\cT_{g_{1,1}}} \gategroup[7,steps=1,style={dashed,rounded
corners,fill=blue!20, inner
xsep=2pt},background,label style={label
position=above,anchor=north,yshift=0.25cm}]{$i = 1$} & &  & &\gate{\cT_{g_{2,1}}}  \gategroup[7,steps=1,style={dashed,rounded
corners,fill=blue!20, inner
xsep=2pt},background,label style={label
position=above,anchor=north,yshift=0.25cm}]{$i= 2$} &  &  \gate[2]{\cT_{g_{3,1}}}  \gategroup[7,steps=1,style={dashed,rounded
corners,fill=blue!20, inner
xsep=2pt},background,label style={label
position=above,anchor=north,yshift=0.25cm}]{$i = 3$}  &  \ \ldots\ & \gate[2]{\cT_{g_{d, 1}}} \gategroup[7,steps=1,style={dashed,rounded
corners,fill=blue!20, inner
xsep=2pt},background,label style={label
position=above,anchor=north,yshift=0.25cm}]{$i = d$} & \qw \\
\setwiretype{c} \cw & \gate[3]{\cT_{g_{1,2}}} & & & & \gate{\cT_{g_{2,2}}} &  &  &\ \ldots\ &  & \qw \\
\setwiretype{c} &   &  &  & & \gate[2]{\cT_{g_{2,3}}} & \trash{\text{trace}} \\
\setwiretype{q} &   & & & &  & & \gate[2]{\cT_{g_{3,2}}} & \trash{\text{trace}}\\[1.5em]
\setwiretype{q} & \gate[]{\cT_{g_{1,3}}} & & & & \gate{\cT_{g_{2,4}}}  &  & &\ \ldots\ & \gate[1]{\cT_{g_{d,2}}} & \\[1.5em]
\wave&&&&&&&&&&\\[1.5em]
 & \gate{\cT_{g_{1, s_1}}} & & & &
\gate{\cT_{g_{2, s_2}}} & & \gate{\cT_{g_{3,s_3}}} & \ \ldots\  & \gate{\cT_{g_{d,s_d}}}   & \qw
\end{quantikz}
\caption{This figure illustrates an example of quantum circuit according to Def.~\ref{def:quantum_circuit}.}
\label{fig:qcircuit-ex}
\end{figure}
\begin{definition}[Quantum Circuit] \label{def:quantum_circuit}
    A quantum circuit $\Phi$ of depth $d$ is a collection of the following objects (see also Fig.~\ref{fig:qcircuit-ex}).
\begin{itemize}
    \item[$(1)$]   A sequence of finite sets $\Delta_0, \dots, \Delta_d$, called \emph{layers}. The zeroth layer $\Delta_0$ is called the \emph{input} of the circuit, and the final layer $\Delta_d$ is the \emph{output}. Each layer $\Delta_i$ (for $i = 0, \dots, d$) decomposes as $\Delta_i = \Delta_i^c \cup \Delta_i^q$, where $\Delta_i^c$ contains \emph{classical wires} and $\Delta_i^q$ contains \emph{quantum wires}. If the circuit has only classical input, then $\Delta_0^q = \emptyset$; if the circuit has no input, then $\Delta_0 = \emptyset$.
    
\item[$(2)$] Partitions of each set $\Delta_{i-1}$, $i = 1, \dots, d$ into ordered subsets (registers) $A_{i1}, \dots, A_{is_i}$, and a corresponding partition of a superset $\Delta'_i \supseteq \Delta_i$ into registers $A'_{i1}, \dots, A'_{is_i}$. Each register $A_{ij}, A'_{ij}$ contains $l_{ij}$ control bits, $q_{ij}, q'_{ij}$ qubits,  and $o_{ij}, o'_{ij}$ operational bits, respectively.

    \item[$(3)$] A collection of gates $g_{i,j}$ drawn from the gate set $\mathbf{A}$, each realizing a quantum channel 
    \begin{equation*}
    \cT_{g_{i,j}}: \mathbb{B}^{
    \otimes {l_{ij}}} \otimes \mathbb{B}^{\otimes {o_{ij}}} \otimes \bL((\mathbb{C}^2)^{\otimes q_{ij}}) \to \mathbb{B}^{\otimes {l_{ij}}} \otimes \mathbb{B}^{\otimes {o'_{ij}}} \otimes \bL((\mathbb{C}^2)^{\otimes q'_{ij}}).    
    \end{equation*}
 The gate $g_{i,j}$ or their numbers $(i,j)$ are called the locations of the quantum circuit. 
    Note that for a fixed $i, 1 \leq i \leq d$, the set of locations $\{g_{i,j}: 1 \leq j \leq s_j \}$ corresponds to the locations in a single layer of the quantum circuit.
\end{itemize}
The \emph{size} of $\Phi$, denoted by $|\Phi|$, is defined as the total number of \emph{locations} it contains.
\end{definition}
\noindent A quantum circuit $\Phi$ according to Definition~\ref{def:quantum_circuit} realizes a quantum channel 
\begin{equation}
    \cT = \cT_d \circ \cdots \circ \cT_1,
\end{equation}
 where $\cT_i$ acts as follows on any  $\rho \in \bL({\mathbb{C}^2}^{\otimes |\Delta_{i-1}|})$,
\begin{equation}
    \cT_i (\rho) =  \tr_{\Delta'_i \setminus \Delta_i}\Big[\cT_{g_{i,1}}[A'_{i1}; A_{i1}] \otimes \dots \otimes \cT_{g_{i, s_i}}[A'_{is_i}; A_{is_i}] \Big] \left(\rho \right),
\end{equation}
where $\cT_{g_{i, j}}[A'_{ij}; A_{ij}]$ denotes the quantum channel $\cT_{g_{i,j}}$, applied to the register $A_{ij}$, whose output is stored in the register $A'_{ij}$.
\begin{remark} \label{rem:idle}
\smallskip Note that in Def.~\ref{def:quantum_circuit}, all the qubits in $\Delta_{i-1}, i =1, \dots, d$, are acted upon by a gate at the $i^{th}$ layer. We suppose that the idle gate acts on each waiting qubit. This is natural for capturing the effect of noise on the waiting qubits (see Section~\ref{sec:noise-qcirc} below).    
\end{remark}

\subsection{General circuit noise} \label{sec:noise-qcirc}
Let $\delta \in [0, 1]$ be an error rate. When general circuit noise with an error rate $\delta$ acts on a quantum circuit $\Phi$, we have that
\begin{enumerate}
    \item Purely classical elements of $\Phi$, i.e. the gates with only classical input-output, are implemented perfectly without any error.

    \item  A gate $g \in \mathbf{A}$ with the hybrid classical-quantum input as in Eq.~(\ref{eq:realization-qgate}) is replaced  by an arbitrary $\tilde{\cT}$ such that,
    \begin{equation} \label{eq:realization-qgate-noisy}
      \tilde{\cT}_g = \sum_{\mathbf{u} \in \{0, 1\}^l } \cB^{\bu} \otimes \tilde{\cT}^\mathbf{u},
\end{equation}
where for all $\mathbf{u} \in \{0, 1\}^{l} $, $\tilde{\cT}^\mathbf{u}$ is an arbitrary quantum channel with the same input and output space as $\cT^\mathbf{u}$ (in particular, classical registers remain classical), and  $\dnorm{\Tilde{\cT}^\mathbf{u} - \cT^\mathbf{u} } \leq \delta$.  Note that a noisy idle gate realizes a channel $\tilde{\cT}_g$, such that $\dnorm{\tilde{\cT}_g - \cI} \leq \delta$. 
\end{enumerate}
A quantum circuit $\Phi$ with the noise rate $\delta$ realizes a quantum channel in the set,
\begin{multline}
    \trans(\Phi, \delta) := \Big\{ \tilde{\cT}_d \circ \cdots \circ \tilde{\cT}_1 :  \tilde{\cT}_i =  \tr_{\Delta'_i \setminus \Delta_i} \circ \Big[\tilde{\cT}_{g_{i, 1}}[A'_{i1}; A_{i1}] \otimes \cdots 
  \otimes \tilde{\cT}_{g_{i, s_i}}[A'_{is_i}; A_{is_i}] \Big],\\
     \text{with $\tilde{\cT}_{g_{i,j}}$ being a $\delta$-noisy version of $\cT_{g_{i,j}}$ as in Eq.~(\ref{eq:realization-qgate-noisy}). } \Big\}.
\end{multline}
Here, `Trans' in $\trans(\Phi, \delta)$ stands for `Transformation'. 

\subsection{Fault-tolerant state preparation}
One of the ingredients for the proof of our main result in Theorem~\ref{thm:main-cons-over-informal} is a fault-tolerant state preparation method for preparing code states of QLDPC codes. To do this, we use a state preparation method given in~\cite{christandl2024fault}, which can prepare any $n$-qubit state up to an error of weight $O(n\delta)$, under general circuit noise with error rate $\delta$.

\smallskip Before stating the state preparation theorem, we first recall the definition of adversarial channel with noise parameter $\delta$ from~\cite{christandl2024fault}. We also present, for later reference, a result from~\cite{christandl2024fault} that states that any adversarial channel with parameter $\delta$ can be approximated by a superoperator of weight $O(n\delta)$.
\begin{definition}[Adversarial channels] \label{def:adver-can}
   A quantum channel   $\cV: \bL(M^{\otimes n})\to \bL(M^{\otimes n})$ is said to be an adversarial channel with parameter $\delta$ if for any $t > 0$, there exists a superoperator $\cV_t$ of weight $t$, that is, $\cV_t \in \mathscr{E}(n, t) (\cdot) \mathscr{E}(n, t)^*$ (see Def.~\ref{def:weight}), such that 
   \begin{equation} \label{eq:d-v-vt}
    \dnorm{\cV - \cV_t} \leq  \sum_{j > t}^n \binom{n}{j} \delta^j.
\end{equation}
\end{definition}

\begin{lemma} \label{lem:bound-adv}
For any adversarial channel $\cV: \bL(M^{\otimes n})\to \bL(M^{\otimes n})$ with parameter $\delta$, there exists a superoperator $\cV'$ of weight $5n\delta$, such that
\begin{equation} \label{eq:v-prime}
    \dnorm{\cV - \cV'} \leq \exp(- \frac{n\delta}{3})
\end{equation}
\end{lemma}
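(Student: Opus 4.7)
The plan is to apply Definition~\ref{def:adver-can} with the single choice $t = 5n\delta$ (rounded to an integer if necessary) and simply set $\cV' := \cV_t$. This immediately gives a superoperator of the required weight, so the whole lemma collapses to the analytic task of bounding the unweighted binomial tail $\sum_{j > 5n\delta}^n \binom{n}{j} \delta^j$ by $\exp(-n\delta/3)$.

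For that tail bound I would use an exponential Markov (Chernoff-style) argument. Multiplying each term by $e^{s(j-t)} \geq 1$ for $j > t$ and completing to a full binomial sum gives, for any $s > 0$,
\begin{equation*}
\sum_{j > t}^n \binom{n}{j} \delta^j \;\leq\; e^{-st}\sum_{j=0}^n \binom{n}{j}(\delta e^s)^j \;=\; e^{-st}(1 + \delta e^s)^n \;\leq\; \exp\bigl(-st + n\delta e^s\bigr),
\end{equation*}
where the last inequality uses $1+x \leq e^x$. I would then pick $s = 1$ (there is no need to fully optimize), yielding the upper bound $\exp\bigl(n\delta(e - 5)\bigr)$; since $e - 5 < -1/3$, this is at most $\exp(-n\delta/3)$, as required.

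I do not anticipate any real obstacle: the argument is a one-line invocation of Definition~\ref{def:adver-can} composed with a standard large-deviation tail estimate, and the edge case $5n\delta \geq n$ is automatic because the tail sum is then empty. The only thing worth flagging is the numerical constant in the statement. The value $5$ is not sharp; any constant strictly greater than $e + 1/3 \approx 3.05$ would suffice. The choice $5$ is comfortable enough that the non-optimized value $s = 1$ already delivers the desired exponential rate, which keeps the proof clean and spares us the $(n\delta/t)^t$-style bound that full optimization would produce.
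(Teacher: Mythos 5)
Your proposal is correct and follows essentially the same route as the paper: both set $\cV' := \cV_t$ with $t = 5n\delta$ from Definition~\ref{def:adver-can} and then control the binomial tail $\sum_{j>t}^n \binom{n}{j}\delta^j$ by a Chernoff-type estimate. The only difference is cosmetic: you derive the tail bound directly via exponential tilting, $e^{-st}(1+\delta e^s)^n \le \exp(-st + n\delta e^s)$ with $s=1$, whereas the paper first normalizes the sum into $(1+\delta)^n\,\Pr(X>t)$ for a Binomial$\bigl(n,\tfrac{\delta}{1+\delta}\bigr)$ variable and invokes the multiplicative Chernoff bound with $\beta = 4$; both yield the exponent $-n\delta/3$.
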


\begin{proof}
From the definition of adversarial channel, there exists a superoperator $\cV_t$ of weight $t$, such that  
  \begin{align} 
    \dnorm{\cV - \cV_t} &\leq  \sum_{j > t}^n \binom{n}{j} \delta^j \\
    & = (1 + \delta)^n \sum_{j > t}^n \binom{n}{j} \left(\frac{\delta}{1 + \delta}\right)^j \left(\frac{1}{1 + \delta}\right)^{n-j}.
\end{align}
We now consider $n$ independent and identically distributed (i.i.d.) random variables $X_1, X_2, \dots, X_n$, such that $X_i \in \{0, 1\}$, and $\mathrm{Pr}(X_i = 1) = \frac{\delta}{1 + \delta}$, for all $i \in [n]$. We define $X = \sum_i X_i$, then
\begin{equation} \label{eq:ran}
   \mathrm{Pr} ( X > t  ) =  \sum_{j > t}^n \binom{n}{j} \left(\frac{\delta}{1 + \delta}\right)^j \left(\frac{1}{1 + \delta}\right)^{n-j}.
\end{equation}
Therefore, from Eq.~(\ref{eq:t}), Eq.~(\ref{eq:ran}), and $(1 + \delta)^n \leq e^{n\delta}$, we get
\begin{equation} \label{eq:t}
    \dnorm{\cV - \cV_t} \leq   e^{n\delta} \mathrm{Pr} ( X > t  ). 
\end{equation}
For a $\beta > 0$, we take $t = (1 + \beta) \frac{n\delta}{1 + \delta}$. Then, using the multiplicative Chernoff bound, we get 
\begin{equation} \label{eq:mult-chern}
     \mathrm{Pr} ( X > t  ) \leq \exp(-\frac{\beta^2}{2 + \beta} \frac{n\delta}{1 + \delta}) \leq \exp(-\frac{\beta^2 n \delta}{2(2 + \beta)}).
\end{equation}
By taking $\beta = 4$ (i.e., $t \leq  5n\delta$), from Eq.~(\ref{eq:t}) and Eq.~(\ref{eq:mult-chern}), we have that Eq.~(\ref{eq:v-prime}) is satisfied for $\cV' = \cV_t$.
\end{proof}
Below, we present the state preparation theorem from~\cite[Theorem~1]{christandl2024fault}.
\begin{theorem}[State Preparation]\label{thm:state-prep}
    Consider a quantum circuit $\Phi$ with classical input and  $n$-qubit output. 
    Let $\delta$ be a fixed noise rate, considering general circuit noise. Then, for any positive integer $k$, there exists a quantum circuit $\overline{\Phi}$, with the same input and output systems as $\Phi$ and of size $|\Phi| \cdot \mathrm{poly}(k)$,
   such that for any noisy version $\tilde{\cT}_{\overline{\Phi}} \in \trans(\overline{\Phi}, \delta)$,
    \begin{equation} \label{eq:state-prep}
        \dnorm{\tilde{\cT}_{\overline{\Phi}} - \cW \circ \Phi} \leq  O(|\Phi| \sqrt{(c\delta)^k}),
    \end{equation}
where $\cW$ is an adversarial channel with parameter $O(\delta)$, and $c$ is a universal constant.
\end{theorem}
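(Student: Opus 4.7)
My plan is to construct $\overline{\Phi}$ by simulating $\Phi$ at the logical level of a concatenated CSS code. I take a constant-size base code (e.g.\ the $[[7,1,3]]$ Steane code) together with standard fault-tolerant gadgets for a universal gate set, and concatenate it $\ell=\lceil\log_2 k\rceil$ levels. Each gate of $\Phi$ is replaced by its level-$\ell$ fault-tolerant gadget, separated by rounds of fault-tolerant error correction; because the classical input is trivially reliable, the only nontrivial preparations are of the logical $|0\rangle$ and of magic states, done fault-tolerantly in the usual way. A fault-tolerant decoding circuit at the end maps the encoded state back to $n$ bare output qubits. Since the base gadgets have constant size and $\ell=O(\log k)$, the overall size is $|\Phi|\cdot c_0^{\log k}=|\Phi|\cdot\mathrm{poly}(k)$, as required.

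For the analysis I would adapt the extended-rectangle framework of Aliferis--Gottesman--Preskill to the adversarial-noise setting. Write each noisy gate as $\tilde{\cT}=\cT+\cE$ with $\dnorm{\cE}\le\delta$, and expand the whole noisy circuit as a signed sum over subsets $S$ of locations where $\cE$ is activated, contributing at most $\delta^{|S|}$ in diamond norm per term. Call an extended rectangle \emph{good} if it contains at most $t$ activated locations, where $t$ is the correction capability of the base code; a good rectangle implements its ideal logical action up to a residual error that is correctable by the next EC. Bad rectangles contribute at most $\binom{s}{t+1}\delta^{t+1}$ in diamond norm per rectangle, where $s$ is the (constant) number of locations per rectangle. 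Applying this argument recursively across the $\ell$ concatenation levels, the effective level-$\ell$ noise rate per logical gate is at most $(c\delta)^{2^\ell}$ for a universal constant $c$; choosing $2^\ell\ge k$ and summing over the $|\Phi|$ logical locations yields the stated $O(|\Phi|\sqrt{(c\delta)^k})$ diamond-norm bound, the square root arising when the operator-weight bound is promoted to a diamond-norm bound via a purification/Uhlmann argument.

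The adversarial channel $\cW$ at the output comes from the last layer of physical noise: the final fault-tolerant decoder acts on physical qubits that are still affected by single-gate noise of rate $\delta$, and no amount of prior error correction can cancel this last layer. Invoking Lemma~\ref{lem:bound-adv} for each code block, the ``weight tail'' of this residual noise (fault patterns of weight more than $O(n\delta)$) is exponentially small, so the decoded output equals $\cW\circ\Phi$ up to the claimed approximation error, with $\cW$ an adversarial channel of parameter $O(\delta)$ in the sense of Definition~\ref{def:adver-can}.

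The main obstacle is the adversarial-noise analogue of the threshold theorem. In the stochastic case one invokes linearity of expectation and a union bound over Pauli error probabilities; for general noise one must instead track the diamond norm of every fault pattern simultaneously and manage the coherent interference between them. Two technical points will need particular care: (i) proving that an extended rectangle with at most $t$ activated faults really does yield an error that the next EC can correct, \emph{independently of the type of errors injected} by $\cE$, and (ii) showing that the residual noise after decoding can be repackaged as an adversarial channel with parameter only $O(\delta)$ rather than something growing with the total circuit size. Both are purely operator-theoretic once the combinatorial counting of rectangles is in place, but they are where the $\sqrt{\cdot}$ loss and the $\mathrm{poly}(k)$ (rather than $\mathrm{polylog}(k)$) overhead ultimately enter.
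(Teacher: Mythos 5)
First, note that the paper does not actually prove Theorem~\ref{thm:state-prep}: it is imported verbatim from~\cite[Theorem~1]{christandl2024fault}, and the paper only sketches the underlying construction in Remark~\ref{rem:size-depth-barphi} (concatenated codes $\cC_r$ with $r\sim\log k$ levels, fault-tolerant gadgets $\Psi_{g,k}$ of size $\mathrm{poly}(k)$, and interface circuits $\Gamma_{k,1}$ decoding each logical qubit back to a bare qubit). Your construction — concatenate a constant-size CSS code to depth $O(\log k)$, replace gates by gadgets, decode at the end, and attribute $\cW$ to the unprotectable last layer of physical noise acting before/during the final decoder — is in outline the same construction, and your accounting of the $|\Phi|\cdot\mathrm{poly}(k)$ size and of the $\sqrt{\cdot}$ loss (amplitudes vs.\ probabilities) is pointed in the right direction.

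However, as a proof the proposal has a genuine gap, and you have in fact located it yourself: points (i) and (ii) at the end are not "technical points that will need particular care," they \emph{are} the theorem. For stochastic noise the extended-rectangle recursion is standard; the entire content of the adversarial-noise version is (i) showing that a rectangle with at most $t$ activated locations, where each activation is an arbitrary superoperator $\cE$ with $\dnorm{\cE}\le\delta$ rather than a Pauli, still produces a residual error that is a bounded-weight superoperator in the sense of Definition~\ref{def:weight} (with the two Kraus-side operators $E_1(\cdot)E_2^*$ possibly supported on \emph{different} qubit sets, which is what the cross terms of the coherent expansion force on you), and that such errors are removed by the next noisy EC; and (ii) repackaging the accumulated residual as a single adversarial channel with parameter $O(\delta)$ independent of $|\Phi|$. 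Deferring both means the proposal establishes the combinatorial skeleton but not the statement. A concrete symptom of this gap: you instantiate the base code as the $[[7,1,3]]$ Steane code, whereas Remark~\ref{rem:size-depth-barphi} explicitly requires the base code $\cC_1$ to have minimum distance at least $5$, i.e.\ to correct two errors. That requirement is not cosmetic — it is what the general-noise analysis of~\cite{christandl2024fault,kitaev} needs to absorb the weight-doubling coming from the cross terms and from faults occurring inside the EC gadgets themselves — so with a distance-$3$ base code your recursion step "a good rectangle yields an error correctable by the next EC, independently of the type of errors injected" would fail. Finally, your appeal to Lemma~\ref{lem:bound-adv} is misplaced for this theorem: that lemma converts an adversarial channel into a bounded-weight superoperator and is used downstream of Theorem~\ref{thm:state-prep}, not in its proof; what you actually need to prove is that the residual noise \emph{satisfies} Definition~\ref{def:adver-can} in the first place.
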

\begin{remark} \label{rem:size-depth-barphi}
The fault-tolerant circuit $\overline{\Phi}$ corresponding to the state preparation circuit $\Phi$ in Theorem~\ref{thm:state-prep} is constructed, using a fault-tolerant scheme~\cite{christandl2024fault, kitaev}, which consists of the following three points, 
\begin{itemize}
    \item[$(a)$] \textbf{Quantum codes:} A sequence of quantum codes $D_k,  k= 1, 2, \dots, $ encoding one logical qubit in $n_k$ physical qubits, with $D_1$ being the trivial code, that is, the corresponding $n_k = 1$.

    \item[$(b)$] \textbf{Logic gate:} For any gate $g$ from a gate set, a sequence of quantum circuits $\Psi_{g, k}, k= 1, 2, \dots$, such that the corresponding noisy version, with noise parameter $\delta > 0$, represents (simulates) the gate $g$ in codes $D_k$, with an accuracy $(c\delta)^k$, for a constant $c > 0$.  

    \item[$(c)$] \textbf{Interface:} A sequence of circuits $\Gamma_{k, l},  k, l= 1, 2, \dots $,  such that the corresponding noisy version with noise parameter $\delta$, represents the identity gate in codes $D_k, D_l$, with accuracy $(c\delta)^{\min\{ k, l \}}$.   
\end{itemize}
Moreover, it is required that size of quantum circuits $\Psi_{g, k}$ and $\Gamma_{k,l}$ scale polynomially, with respect to $k, l$. It has been shown that a fault-tolerant scheme can be built using concatenated codes $\cC_r, r =0, 1, 2, \dots$, where $r$ is the level of concatenation, which is related with the parameter $k$ in Point $(a)$ of the fault-tolerant scheme as $r \sim \log k$~\cite{christandl2024fault}. For $r= 0$, $\cC_r$ is the trivial code and $\cC_1$ is the base code for concatenation. It is required that $\cC_1$ has minimum distance at least $5$, \emph{i.e.}, it corrects at least two errors.

\smallskip The interface circuit $\Gamma_{k, l}$, corresponding to $l = 1$, plays an important role in the construction of the fault-tolerant circuit $\overline{\Phi}$ from Theorem~\ref{thm:state-prep}. Note that $\Gamma_{k, 1}$ maps a logical qubit encoded in $n_k$ physical qubit to a bare logical qubit, while preserving the logical information with a reasonable accuracy $O(\delta)$.
The fault-tolerant circuit $\overline{\Phi}$ corresponding to $\Phi$ and integer $k$ in Theorem~\ref{thm:state-prep} can be taken as follows,
\begin{equation}
\overline{\Phi} := \otimes_{i=1}^n \Gamma_{k, 1} \circ \Phi_k,
\end{equation}
where $\Phi_k$ is obtained from $\Phi$ by replacing each gate $g$ with the corresponding fault-tolerant logic gate $\Psi_{g, k}$. As both $\Psi_{g, k}$ and $\Gamma_{k, l}$ have size scaling polynomially in $k$ and $l$, respectively, it follows that the size of $\overline{\Phi}$ is $|\Phi| \cdot \mathrm{poly}(k)$, and its depth is $d \cdot \mathrm{poly}(k)$, where $d$ is the depth of $\Phi$.
\end{remark}
\section{Constant overhead fault-tolerant quantum computing for general circuit noise} \label{sec:proof}
In this section, we prove our main result, stated in Theorem~\ref{thm:main-cons-over-informal}. We will first formally state our constant overhead result in Theorem~\ref{thm:main-cons-over}, and then provide a proof.

\begin{theorem} \label{thm:main-cons-over}
Let $\Phi$ be a quantum circuit, with classical input and output, working on $x$ qubits and $|\Phi| = \mathrm{poly}(x)$. For any $\epsilon > 0$, and asymptotic overhead $\alpha > 1$, there exists a constant error rate $\delta_{th} > 0$, and a quantum circuit $\Phi_\epsilon, |\Phi_\epsilon| = \mathrm{poly}(x)$, with the same input and output systems as $\Phi$, and working on $x' = \alpha x + o(x)$ qubits, such that for any $\tilde{\cT}_{\Phi_\epsilon} \in \trans(\Phi_\epsilon, \delta)$, $ \delta < \delta_{th}$,
    \begin{equation}
       \| \tilde{\cT}_{\Phi_\epsilon} - \cT_{\Phi} \|_1 \leq  \epsilon.
    \end{equation}       
\end{theorem}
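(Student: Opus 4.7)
The plan is to instantiate the Gottesman-style QLDPC fault-tolerance scheme sketched in the introduction, using Theorem~\ref{thm:state-prep} for ancilla preparation under general noise and the single-shot fault-tolerant error correction promised in Theorem~\ref{thm:cir-noise-ftec}. First, I would sequentialize $\Phi$: rewrite it so that each time step contains at most one non-trivial multi-qubit gate (on some fixed logical qubits) and idle gates everywhere else; this inflates the depth but not the qubit count, and keeps the total size polynomial in $x$. I then fix a family of QLDPC codes with $n$ physical qubits, $k$ logical qubits and minimum distance $d$, chosen so that the rate $k/n$ satisfies $n/k < \alpha$ and $d = \Theta(n)$; such codes exist by the recent constructions cited in the introduction and satisfy the hypotheses of Theorem~\ref{thm:cir-noise-ftec}. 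Setting $m = k = \Theta(\sqrt{x})$, I partition the $x$ logical qubits of $\Phi$ into $x/m$ blocks, each encoded in $n = \Theta(\sqrt{x})$ physical qubits, for a total of $(n/k)\,x \le \alpha x$ data qubits.

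Next, I build $\Phi_\epsilon$ layer by layer from the sequentialized $\Phi$. An idle layer becomes a single round of parallel single-shot fault-tolerant error correction applied to all $x/m$ data blocks, whose accuracy is controlled by Theorem~\ref{thm:cir-noise-ftec}; this is the step where cross-block correlations matter, but the joint single-shot guarantee absorbs them. A non-trivial layer carrying a gate $g$ on $O(1)$ logical qubits is implemented by gate teleportation: the required encoded ancilla is produced by invoking Theorem~\ref{thm:state-prep} with concatenation parameter $k_c = \Theta(\log(|\Phi|/\epsilon))$ on a dedicated auxiliary register of $O(\sqrt{x}\cdot\mathrm{polylog}(x/\epsilon)) = o(x)$ qubits. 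While the ancilla is being prepared, which takes many time steps, I simultaneously apply repeated single-shot EC rounds on every data block (see Fig.~\ref{fig:layout-sim-circ}), so the data stays protected. Once the ancilla is ready, it is merged with the relevant data block(s) and a Bell-basis measurement plus classical Pauli correction completes the teleportation. Because only one non-trivial gate is processed at a time, the same $o(x)$-sized auxiliary register is reused for every teleportation, keeping the total qubit count at $\alpha x + o(x)$.

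The main obstacle is the combined error analysis. Theorem~\ref{thm:state-prep} outputs an ancilla equal to $\cW \circ \Phi_{\mathrm{prep}}$ plus a diamond-norm error of order $|\Phi_{\mathrm{prep}}|\sqrt{(c\delta)^{k_c}}$, where $\cW$ is an adversarial channel of parameter $O(\delta)$. Lemma~\ref{lem:bound-adv} lets me replace $\cW$, up to an exponentially small remainder, by a superoperator of weight at most $5n\delta$, which is strictly below $d/2 = \Theta(n)$ once $\delta$ is smaller than a constant $\delta_{th}$; such bounded-weight errors are absorbed by the next round of single-shot EC via Theorem~\ref{thm:cir-noise-ftec}. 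Choosing $k_c = \Theta(\log(|\Phi|/\epsilon))$ pushes the per-teleportation preparation error to $O(\epsilon/|\Phi_\epsilon|)$, and a union bound over the $\mathrm{poly}(x)$ locations then yields the desired $\|\tilde{\cT}_{\Phi_\epsilon} - \cT_\Phi\|_1 \le \epsilon$. The delicate part will be setting up a clean induction on time that propagates, in parallel, three guarantees simultaneously: the per-round single-shot EC bound applied across all $x/m$ blocks, the weight-$O(n\delta)$ adversarial-channel replacement for each freshly injected ancilla, and the standard gate-teleportation identity, while verifying at every round that the weight of the residual error entering the next EC round remains below the code's correction threshold.
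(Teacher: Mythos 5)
Your proposal follows essentially the same route as the paper: sequentialize $\Phi$, encode $x/m$ blocks of $m=\Theta(\sqrt{x})$ logical qubits into $(n,m)$ QLDPC code blocks with linear distance, run parallel single-shot error correction (Theorem~\ref{thm:cir-noise-ftec}) on idle blocks while an ancilla is prepared via Theorem~\ref{thm:state-prep} at concatenation parameter $\Theta(\log(|\Phi|/\epsilon))$, reduce the adversarial channel on the prepared ancilla to a weight-$O(n\delta)$ superoperator via Lemma~\ref{lem:bound-adv} so the next EC round absorbs it, and conclude by an induction over the $\mathrm{poly}(x)$ logical time steps. This is the paper's argument.

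One concrete quantitative slip worth fixing: your qubit count includes only the data qubits, $(n/m)x$, and the reusable $o(x)$ ancilla-preparation register, but omits the $n-m$ ancilla qubits per block needed for syndrome extraction. These are required \emph{simultaneously} on all $x/m$ blocks in every parallel EC round, so they cannot be time-shared the way the preparation register is. With your rate condition $n/m<\alpha$ the total becomes $(2n/m-1)x+o(x)$, which can be as large as $(2\alpha-1)x+o(x)$ and exceeds $\alpha x+o(x)$ for $\alpha>1$. The paper sets $R=m/n=2/(1+\alpha)$ precisely so that data qubits plus syndrome ancillas sum to $\alpha x$. A second, smaller omission: for a universal gate set, the teleportation correction for a gate in the second level of the Clifford hierarchy (e.g.\ the $T$ gate) is a Clifford rather than a Pauli, and is handled by one further prepared ancilla state; your write-up mentions only Pauli corrections.
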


\subsection{Proof of Theorem~\ref{thm:main-cons-over}} 
We consider the fault-tolerant construction from~\cite{gottesman2013fault, fawzi2020constant}, which requires $\Phi$ to be sequential in the following sense: at any layer of the circuit $\Phi$, only one non-trivial gate is applied on a selected set of qubits, and the idle gate is applied on the remaining qubits (see also Remark~\ref{rem:idle}).
 This means we need to map $\Phi$ to a sequential circuit $\Phi_{\mathrm{seq}}$. Any circuit $\Phi$ can be mapped to a sequential circuit $\Phi_{\mathrm{seq}}$ by incurring only a linear overhead in the size of the circuit, $|\Phi_{\mathrm{seq}}| = O(x) |\Phi|$. Note that the linear overhead is due to the additional idle gates, which are accounted in our circuit model. Therefore, without loss of generality, we suppose $\Phi$ to be a sequential circuit.

\smallskip We consider a family of QLDPC codes with rate $R = \frac{2}{1 + \alpha}$, where $\alpha > 1$ is the desired asymptotic overhead and take a code $C$ of type $(n, m)$ in this family (note that asymptotically $n = \tfrac{m}{R}$). For this family of QLDPC codes, we suppose fault-tolerant realizations of error correction and logic gates, as described in the following two paragraphs, respectively.

\paragraph{Fault-tolerant error correction.} Consider $h$ code blocks of the code $C$ and a joint state $\ket{\psi}$ in the code space of these $h$ blocks. Let $\cV$ be an error map on the state $\proj{\psi}$, and suppose that it has  stabilizer reduced weight $sn$ for a constant $ 0 < s < 1$, with respect to each code block (see Defs.~\ref{def:stab-red-weight},~\ref{def:reduced-op}, and~\ref{def:reduced-supop} below for definition of the stabilizer reduced weight). Then, when the error correction circuit with general noise is applied on every code block, it maps $\cV(\proj{\psi})$ to $\cV'(\proj{\psi})$ with high accuracy, where the stabilizer reduced weight of $\cV'$ with respect to each code block is given by $s'n$, with $s' < s$. More precisely, let $\Phi^i_{\mathrm{EC}}$ be the instance of the error correction circuit applied on the block indexed $i \in [h]$, and let $\tilde{\cT}_{\Phi^i_{\mathrm{EC}}} \in \trans(\Phi^i_{\mathrm{EC}}, \delta), i \in [h]$ be a noisy realization under general circuit noise. Then, we have 
\begin{equation} \label{eq:ftec-sketch}
    \| ( \otimes_{i \in [h]} \tilde{\cT}_{\Phi^i_{\mathrm{EC}}}) \circ \cV (\proj{\psi}) - \cV'(\proj{\psi}) \|_1 \leq O( h \exp\left(- O(n \delta) \right) ).
\end{equation}
We provide in Theorem~\ref{thm:cir-noise-ftec}, a fault-tolerant error correction circuit $\Phi_{\mathrm{EC}}$, satisfying Eq.~(\ref{eq:ftec-sketch}). To this end, we show that the single-shot error correction for quantum Tanner codes from~\cite{gu2024single} satisfies Eq.~\eqref{eq:ftec-sketch}, while considering general circuit noise for syndrome extraction. We note that single-shot error correction requires only one round of noisy syndrome measurement, which can be done by a circuit of constant depth (independent of $n$) for QLDPC codes. Moreover, the classical decoding algorithm, estimating the error, is also of constant depth~\cite{gu2024single}. Thus, single-shot error correction is executed in constant time.

\paragraph{Realization of logic gates.} 
The logic gate corresponding to preparing the $\ket{0}$ state is realized using our state preparation procedure given in Theorem~\ref{thm:state-prep}.  For a logic unitary gate, we use gate teleportation as in~\cite{gottesman2013fault, fawzi2020constant}. To realize the logical version of a unitary gate $U_g$ between logical qubits in the same code block, one needs to prepare as ancilla the following entangled code state
\begin{equation} \label{eq:encoded-ancilla-st}
\ket{\psi'_{U_g}} = V \otimes V  \ket{\psi_{U_g}}, 
\end{equation}
where $V$ is the encoding isometry corresponding to the QLDPC code, and $\ket{\psi_{U_g}}$ is a $2m$ qubit entangled state as follows,
\begin{equation} \label{eq:ancilla-gate-tel}
 \ket{\psi_{U_g}} = (\ident \otimes U_g) \left(\frac{\ket{0}\ket{0} + \ket{1}\ket{1}}{\sqrt{2}}\right)^{\otimes m},
 \end{equation}
with $U_g$ acting on a selected set of qubits in the second group of $m$ qubits and identity is applied everywhere else. A similar state is needed when $U_g$ acts on qubits in different blocks. We use Theorem~\ref{thm:state-prep} to prepare the code state $\ket{\psi'_{U_g}}$ fault-tolerantly.

\smallskip During the gate teleportation, one performs logical Bell measurements between the code block, and the first code block of the ancilla state. After the Bell measurement, the state of the remaining ancilla block is given by $U^\dagger_{\mathrm{cor}} U^L_g \ket{\psi}$, where $\ket{\psi}$ is the logical state corresponding to the code block prior to gate teleportation, $U^L_g$ is the logical unitary corresponding to $U_g$ and $U_{\mathrm{cor}}$ is a correction unitary. If $U_g$ is in the Clifford group, then the correction is a Pauli gate which is transversal~\cite[Chapter 6]{grospellier2019constant}. If $U_g$ is in the second level of the Clifford hierarchy (e.g., $U_g$ is the $T$ gate), then the correction is a Clifford gate that we can apply in a transversal way using another state preparation for this correction unitary.

\smallskip We note that the logical Bell measurement can be performed by applying a transversal $\CNOT$ gate between the code blocks, then a transversal Hadamard gate on the first block, and then a transversal single qubit Pauli measurements on the two code blocks~\cite[Chapter 6]{grospellier2019constant}. For a detailed discussion of the implementation of logic gates via ancillary preparation and gate teleportation, we refer to~\cite[Chapter 6]{grospellier2019constant}.

\paragraph{Layout of the circuit $\Phi_\epsilon$.} The quantum circuit $\Phi_\epsilon$ simulating $\Phi$ is constructed as follows: we divide $x$ qubits of $\Phi$ into $x / m$ blocks, each containing $m = O(\sqrt{x})$ qubits\footnote{We have chosen the block size $m = O(\sqrt{x})$ for the simplicity of notation as in~\cite{grospellier2019constant}. In principle, one can choose $m = O(\frac{x}{\mathrm{polylog}(x)})$~\cite{gottesman2013fault}.}. We replace each block of $m$ qubits by $n$-qubits corresponding to a code block of the QLDPC code $C$. The quantum circuit $\Phi_\epsilon$ is obtained by replacing each gate in the circuit $\Phi$ by the corresponding logic gate, which acts within one block or between a pair of blocks, each containing $n$-qubits.

\smallskip As $\Phi$ is a sequential circuit, a layer in $\Phi$ realizes only one non-trivial gate. In $\Phi_\epsilon$, the logical version of a layer in $\Phi$ is realized in several layers at the physical level. The depth of the logical layer is determined by the depth of the non-trivial logic gate applied at that layer. While the non-trivial logic gate is applied on the respective blocks in $\Phi_\epsilon$, repeated error correction is applied on the remaining blocks. We illustrate in Fig.~\ref{fig:layout-sim-circ} the implementation of a logic unitary gate, using gate teleportation, when $U_g$ is applied on a set of logical qubits within one code block.

\smallskip The logic gate corresponding to a unitary $U_g$ is realized in the following two steps (see also Fig.~\ref{fig:layout-sim-circ}): 
\begin{itemize}
    \item[$(i)$] \textbf{Ancilla preparation:} During this step, the required ancilla code state $\ket{\psi'_{U_g}}$ from Eq.~(\ref{eq:encoded-ancilla-st}) is prepared. While the ancilla state is prepared, one keeps applying error correction on the code blocks of the QLDPC codes (see Fig.~\ref{fig:layout-sim-circ}). We know that there exists a circuit $\Psi'_{U_g}$, working on $O(n)$ qubits, which prepares $\ket{\psi'_{U_g}}$, with size $|\Psi'_{U_g}| = O(\mathrm{poly}(n))$ and  depth $d' = O(\mathrm{poly}(n))$~\cite[Chapter 4]{gottesman-thesis}\footnote{It is possible to reduce the depth of $\Psi'_{U_g}$ to $O(\log n)$ at the cost of extra ancillary qubits~\cite{moore2001parallel}.}. To prepare $\ket{\psi'_{U_g}}$, we consider the fault-tolerant circuit $\overline{\Psi'}_{U_g}$ associated with $\Psi'_{U_g}$, according to Theorem~\ref{thm:state-prep}.   Let $\rho$ be the output of $\overline{\Psi'}_{U_g}$, with general circuit noise. Then, from Eq.~(\ref{eq:state-prep}), we have   
\begin{equation} \label{eq:err-ancilla-prep}
    \|\rho -  \cV(\proj{\psi'_{U_g}})  \|_1 \leq O(|\Psi'_{U_g}|) (c \delta)^{k/2},
\end{equation}
where $\cV$ is a superoperator of weight $O(n \delta)$ and $k$ is the parameter from Theorem~\ref{thm:state-prep}. Therefore, $\overline{\Psi'}_{U_g}$ prepares the code state $\ket{\psi'_{U_g}}$ up to an error of weight $O(n\delta)$, with very high accuracy given that $k$ is sufficiently large. From Theorem~\ref{thm:state-prep} and Remark~\ref{rem:size-depth-barphi}, note that the size and the depth of $\overline{\Psi'}_{U_g}$ are given by,
\begin{align}
  |\overline{\Psi'}_{U_g}| = |\Psi'_{U_g}| \: \mathrm{poly}(k) = O(\mathrm{poly}(n) \: \mathrm{poly}(k)) \label{eq:size-ft-enc} \\
\overline{d'} = d' \: O(\mathrm{poly}(k)) = O(\mathrm{poly}(n) \: \mathrm{poly}(k))  \label{eq:depth-ft-enc}
\end{align}
\item[$(ii)$] \textbf{Gate teleportation:} The prepared ancilla state is then used to implement the logic gate on the selected code block, using gate teleportation. On the remaining blocks, error correction circuits are applied. 
\end{itemize}

From Eq.~\eqref{eq:depth-ft-enc}, we note that the time-complexity of the ancillary state preparation increases with $n$ in contrast with the single-shot error correction on QLDPC code. This, however, does not pose a problem since while the ancillary state is being prepared, repeated rounds of error correction are performed on the QLDPC code (see Fig.~\ref{fig:layout-sim-circ}), thereby logical information remains protected throughout the state preparation.
\paragraph{Error bound on computation.} Our state preparation procedure together with the transversality of the gates, used in gate teleportation, imply that only an error of weight $c_g  \delta n$, for a constant $c_g > 0$, is introduced during the implementation of the logic gate. By choosing the parameter $\delta$ small enough so that $s' + c_g \delta < s$, we can make sure that error can be suppressed in the next round of error correction, as given in Eq.~(\ref{eq:ftec-sketch}). Therefore, after the $t^{th}$ logical layer of the quantum circuit  $\Phi_\epsilon$, with general noise (supposing the last layer is error correction), we obtain a quantum state $\rho_t$, which is close to the desired code state $\proj{\psi_t}$ (i.e, the state at the $t$-th step of $\Phi_\epsilon$ without noise) in the following sense, 
\begin{equation} \label{eq:comp-err}
    \|\rho_t - \cV'(\proj{\psi_t}) \|_1 \leq O \left( t \overline{d'}\sqrt{x}  \exp\left(- O(\sqrt{x}\delta) \right) \right) + O( t |\Psi_{U_g}|) (c \delta)^{k/2},
\end{equation}
where $\cV'$ is a superoperator of weight $s'n$, with respect to each code block. The first error term on the right hand side of Eq.~(\ref{eq:comp-err}) is due to fault-tolerant error correction (see Eq.~(\ref{eq:ftec-sketch})), and the second error term is due to the ancilla state preparation (see Eq.~(\ref{eq:err-ancilla-prep})). For the second term, using $|\Psi_{U_g}| = O(\mathrm{poly}(n)) = O(\mathrm{poly}(\sqrt{x}))$, and  $t = \mathrm{poly}(x)$,  it suffices to choose $k = \mathrm{polylog}(\frac{x}{\epsilon})$ to make this term of order $\epsilon$. Therefore, from Eq.~(\ref{eq:size-ft-enc}) and Eq.~(\ref{eq:depth-ft-enc}), the size and depth of the state preparation circuit is given by,
\begin{align}
  |\overline{\Psi'}_{U_g}|  &=  O \left (\mathrm{poly}(x) \:  \mathrm{polylog}\left(\frac{1}{\epsilon}\right) \right). \\
  \overline{d'} &= O \left (\mathrm{poly}(x) \:  \mathrm{polylog}\left(\frac{1}{\epsilon}\right) \right). \label{eq:depth-ancilla-circ}
\end{align}
Finally, since $t = \mathrm{poly}(x)$, the first term approaches to zero as $x \to \infty$.

\paragraph{Proof of constant qubit overhead.} We now count the number of qubits in the quantum circuit $\Phi_\epsilon$,

\begin{itemize}
    \item [(i)] \textbf{Data qubits:} The number of data qubits corresponding to all the code blocks is given by $\frac{x}{R}$.

    \item [(ii)] \textbf{Ancilla qubits for generator measurements:} The number of ancilla qubits used for error correction in each block is $n - m$. Therefore, the total number of ancilla qubits needed for error correction is given by, $\frac{(n-m)x}{m} = x (\frac{1}{R}-1)$. 

    \item [(iii)] \textbf{Ancilla qubits for state preparation:}   As discussed before, we can take $k = \mathrm{polylog}(\frac{x}{\epsilon})$ to achieve an error of order $\epsilon$ in ancilla state preparation.   Since the qubit overhead of preparation according to Theorem~\ref{thm:state-prep} is $\mathrm{poly}(k)$, the number of qubits needed to prepare the ancilla state is given by
    \begin{equation}
              O\left(n \: \mathrm{polylog}\left(\frac{x}{\epsilon}\right)\right)=    O\left(\sqrt{x} \: \mathrm{polylog} \left(\frac{x}{\epsilon}\right)\right).
    \end{equation}
\end{itemize}
Therefore, the total number of qubits is given by,
\begin{equation}
   x' =  2x \frac{1}{R} - x +  O\left(\sqrt{x} \: \mathrm{polylog} \left(\frac{x}{\epsilon}\right)\right) = \alpha x + o(x).  
\end{equation}
\section{Single-shot error correction for QLDPC codes against general errors} \label{sec:single-shot-err}

For fault-tolerant error correction, one needs to consider the errors on both data qubits and syndrome bits as errors also happen during the syndrome extraction procedure. Usually, syndrome errors are dealt with by extracting several rounds of syndromes~\cite{shor1996fault, fowler2012surface}. Single-shot error correction uses only a single round of noisy syndrome extraction to perform error correction~\cite{bombin2015single, campbell2019theory}. Recently, it has been shown that the (asymptotically) good QLDPC codes of~\cite{leverrier2022quantum}, namely, quantum Tanner codes, support a single-shot decoding~\cite{gu2024single}, where Pauli errors are applied on the data qubits, and bit flip errors on the syndrome bits. In this section, we generalize the single shot error correction for general linear error operators on data qubits. This generalization will be needed for the fault-tolerant error correction under  general circuit noise, given in Theorem~\ref{thm:cir-noise-ftec} in the next section.

\medskip We below provide a definition of QLDPC codes.
\begin{definition}[$(r, s)$-QLDPC codes~\cite{gottesman2013fault}] \label{def:ldpc-code}
A stabilizer code is said to be an $(r, s)$-QLDPC code if there exists a choice of stabilizer generators $\{ M_1, \dots, M_l \}$ such that the weight of $M_i$ is at most $r$, for all $i \in [l]$, and each qubit is non-trivially involved in at most $s$ generator measurements. 
\end{definition}

\medskip In what follows, we fix a Calderbank-Shor-Steane (CSS) QLDPC code $C$ of type $(n, m)$, i.e,  encoding $m$ logical qubits in $n$ physical qubits, and with minimum distance $d_{\mathrm{min}}$. Moreover, we suppose that $C$ admits a single-shot decoding procedure. For example, one can choose a quantum Tanner code from~\cite{leverrier2022quantum} and the decoder analysis of~\cite{gu2024single}. We denote the  generating set $\{ M_1, \dots, M_l \}, l < n$, as in Def.~\ref{def:ldpc-code} (generally, $l = n - m$).  The encoding channel of $C$ is given by $V(\cdot)V^\dagger$, where $V: (\mathbb{C}^2)^{\otimes m} \to (\mathbb{C}^2)^{\otimes n}$ is an isometry, and the code space $\overline{L} := \mathrm{Im}(V) \subseteq (\mathbb{C}^2)^{\otimes n}$. 

\smallskip The error correction map is given by a classical controlled unitary 
\[ \cU_{\mathrm{ec}}: \mathbb{B}^{\otimes l} \otimes \bL((\mathbb{C}^2)^{\otimes n}) \to \mathbb{B}^{\otimes l} \otimes \bL((\mathbb{C}^2)^{\otimes n}),\] 
where $\mathbb{B}^{\otimes l}$ refers to the syndrome bits, that is, the outcome of generator measurements $\{ M_1, \dots, M_l \}$. For stabilizer codes, $\cU_{\mathrm{ec}}$ is a controlled Pauli, where $l$ bits act as control and $n$-qubits as target. For simplicity of notation, we define $\cU_{ec} := U_{ec} (\cdot) U_{ec}^\dagger$, where $U_{ec}$ is a controlled Pauli, with $l$ qubit control and $n$-qubit target.

\medskip In the following definition, we recall the stabilizer reduced error weight.

\begin{definition}[Stabilizer reduced error weight~\cite{gu2024single}] \label{def:stab-red-weight}
Consider a CSS code of type $(n, m)$, and let $X^{\be_x}, \be_x  \in \{0, 1\}^{n}$ be a Pauli-$X$ error. Let $|\be_x|$ denote the Hamming weight of $\be_x$. The stabilizer reduced weight $|\be_x|_\mathrm{red}$ of $\be_x$ is defined as, 
\begin{equation*} 
   |\be_x|_\mathrm{red} := \min \{ |\be'_x| : X^{\be'_x} \text{ is  equivalent to $X^{\be_x}$ up to stabilizer multiplication} \}. 
\end{equation*}
 Similarly, we can define stabilizer reduced weight for Pauli-$Z$ errors.  The stabilizer reduced weight of the total error $\be = (\be_x, \be_z)$ is defined as $|\be|_\mathrm{red} := \max \{|\be_x|_\mathrm{red}, |\be_z|_\mathrm{red}\}$. 
\end{definition}

Def.~\ref{def:reduced-op} and Def.~\ref{def:reduced-supop} extend the definition of the reduced error weight to linear error operators and superoperators, respectively. To this end, we consider error operators that act jointly on the $n$ physical qubits of the code and on a $\overline{n}$-qubit auxiliary system. This is done keeping in mind that we will consider error correction of multiple code blocks in parallel; therefore, we would need a notion of reduced error weight with respect to each code block.
\begin{definition} \label{def:reduced-op} 
    Consider a CSS code of type $(n, m)$. Let $E \in \bL((\mathbb{C}^2)^{\otimes \overline{n}} \otimes (\mathbb{C}^2)^{\otimes n})$ be an operator, jointly acting on a code block containing $n$ physical qubits and on an auxiliary system containing $\overline{n}$-qubits.  We expand $E$ as,
\begin{equation} \label{eq:expansion-k}
    E = \sum_{\be_x, \be_z \in \{0, 1\}^n}    E_{\be_x, \be_z} \otimes X^{\be_x} Z^{\be_z},
\end{equation}
where $E_{\be_x, \be_z} := \frac{1}{2^n} \tr_{[n]}(E (\ident_{\overline{n}} \otimes Z^{\be_z} X^{\be_x}) ) \in  \bL((\mathbb{C}^2)^{\otimes \overline{n}}) $, where $\tr_{[n]}$ refers to tracing out the $n$-qubits in the code block. The reduced weight of $E$ with respect to the code block is defined as 
\begin{equation*} 
    |E|_\mathrm{red} := \max \{ |\be|_\mathrm{red}, \be = (\be_x, \be_z) \in \{0, 1\}^{2n} : E_{\be_x, \be_z} \neq 0\}.
\end{equation*}
\end{definition}
\medskip
\begin{definition} \label{def:reduced-supop}
For a superoperator $\cV: \bL((\mathbb{C}^2)^{\otimes \overline{n}} \otimes (\mathbb{C}^2)^{\otimes n}) \to \bL((\mathbb{C}^2)^{\otimes \overline{n}} \otimes (\mathbb{C}^2)^{\otimes n})$, we say $|\cV|_\mathrm{red} \leq t$ if there exists a decomposition,
\begin{equation}
    \cV = \sum_{i} E_i (\cdot) E_i^{\prime \dagger},
\end{equation}
such that $\max_i \{ |E_i|_\mathrm{red}, |E'_i|_\mathrm{red} \} \leq t$.
\end{definition}

\subsection{Results for single-shot error correction}
Let $\Pi_{\sigma} \in \bL((\mathbb{C}^2)^{\otimes n})$ be the projector on the space corresponding to the syndrome $\sigma := (\sigma_1, \dots, \sigma_l) \in \{0, 1\}^l$, that is,
\begin{equation}
    \Pi_\sigma = \prod_{i \in [l]} \frac{1}{2} (1 + (-1)^{\sigma_i} M_i).
\end{equation}
Let $\sigma_\be = (\sigma_{\be_x}, \sigma_{\be_z})$ be the syndrome corresponding to the Pauli error given by $\be = (\be_x, \be_z)$, that is, $\sigma_{\be_x} = H_Z \be_x, \sigma_{\be_z} = H_X \be_z$, where $H_Z$, and $H_X$ are $Z$ and $X$ type parity check matrices of the CSS code $C$, respectively. We have for any $\ket{\psi} \in \overline{L}$,
\begin{equation} \label{eq:pauli-proj}
    \Pi_{\sigma} X^{\be_x} Z^{\be_z} \ket{\psi} = \alpha_{\sigma, \sigma_\be} X^{\be_x} Z^{\be_z} \ket{\psi},
\end{equation}
where $\alpha_{\sigma, \sigma_\be} = 1$ if $\sigma = \sigma_\be$ and  $\alpha_{\sigma, \sigma_\be} = 0$ if $\sigma \neq \sigma_\be$. The quantum channel corresponding to the (perfect) syndrome measurement is given by 
\begin{equation}
    \cM_{\mathrm{syn}}(\cdot) :=  \sum_{\sigma \in \{0, 1\}^l} \proj{\sigma} \otimes \Pi_\sigma (\cdot) \Pi_\sigma
\end{equation}
The following channel captures the effect of the syndrome error $\be^{\mathrm{syn}} \in \{0, 1\}^l$, where the recorded syndrome is $\sigma$, but the realized projective measurement is $\Pi_{\sigma \oplus \be^{\mathrm{syn}}}$,
\begin{equation}  \cM^{\be^{\mathrm{syn}}}_{\mathrm{syn}}(\cdot) :=  \sum_{\sigma \in \{0, 1\}^l} \proj{\sigma} \otimes \Pi_{\sigma \oplus \be^{\mathrm{syn}}} (\cdot) \Pi_{\sigma \oplus \be^{\mathrm{syn}}}.
\end{equation}

\smallskip Below, we present the one shot-decoding theorem from~\cite[Theorem 3.5]{gu2024single}, stated in a slightly different language. We have taken constant $\alpha = 0$ in ~\cite[Theorem 3.5]{gu2024single}. Informally, the theorem states that if the data error and the syndrome errors are bounded, then the error correction procedure suppresses the input error, even though it may not completely remove it (see Fig.~\ref{fig:single-shot-dec} below). 
\begin{figure}[!t]
\centering
\begin{subfigure}[b]{0.7 \linewidth}
\centering 
\resizebox{1.0\textwidth}{!}{
\begin{quantikz}
 & \qwbundle{n} & \qw    & \gate[1, style = {inner
ysep=4pt}]{X^{\mathbf{e}_x}Z^{\mathbf{e}_z}} \gategroup[1,steps=1,style={dashed,rounded
corners,fill=blue!20, inner
xsep=2pt},background,label style={label
position= above,anchor=north,yshift=0.35cm}]{ $\mathbf{e}_{|\mathrm{red}|} \leq \beta Sn + Rn$} &\gate[2]{\Phi_{\mathrm{syn}}} & & & \gate[2]{\text{Corr}} & &   \\
\lstick{ $\ket{0}^{\otimes l}$} & \qwbundle{l} & \qw & \qw &  & \gate[1, style = {inner
ysep=4pt}]{\mathbf{e}^{\mathrm{syn}}} \setwiretype{c} \gategroup[1,steps=1,style={dashed,rounded
corners,fill=blue!20, inner
xsep=2pt},background,label style={label
position= below,anchor=north,yshift=-0.2cm}]{$|\mathbf{e}^{\mathrm{syn}}| \leq Sn$}  & \gate[1, style = {inner ysep = 4pt}]{\mathrm{Dec}} & & \trash{\mathrm{trace}} 
\end{quantikz}
}
\end{subfigure}
\text{$=$}
\begin{subfigure}[b]{0.25 \linewidth}
\centering
\begin{quantikz}
 & \qwbundle{n} & \gate[1, style = {inner
ysep=4pt}]{X^{\mathbf{e}'_x}Z^{\mathbf{e}'_z}} \gategroup[1,steps=1,style={dashed,rounded
corners,fill=blue!20, inner
xsep=2pt},background,label style={label
position= above,anchor=north,yshift=0.4cm}]{ $\mathbf{e}'_{|\mathrm{red}|} \leq \beta Sn$}
\end{quantikz}
\end{subfigure}
\caption{This figure illustrates the single-shot error correction procedure of~\cite{gu2024single}. The input is a code state of a QLDPC code, subjected to a Pauli error $X^{\be_x} Z^{\be_z}$. Then, the noiseless syndrome extraction circuit is applied, after which a bit-flip error $\be^{\mathrm{syn}}$ affects the syndrome bits. Following classical decoding and error correction, the remaining error on the code state is $X^{\be'_x} Z^{\be'_z}$. There exist constants $R > 0, S >0, \beta >0$ with $\beta Sn + Rn < \frac{d_{min}}{2}$, such that if the stabilizer reduced weight of the data error satisfies $|\be|_{\mathrm{red}} \leq \beta S n + R n$ and the syndrome error satisfies $|\be^{\mathrm{syn}}| \leq S n$, then the residual error satisfies $|\be'|_{\mathrm{red}} \leq \beta S n$. Thus, the logical error is effectively suppressed using single-shot error correction.}
\label{fig:single-shot-dec}
\end{figure}
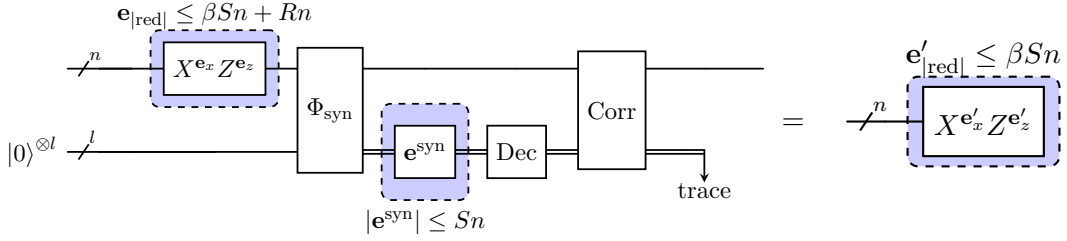

\begin{theorem}[Single-shot decoding~\cite{gu2024single}] \label{thm:single-shot-dec}
There exist constants $R >0, S >0, \beta > 0$ with $\beta Sn + Rn < \frac{d_{min}}{2}$, such that the following holds:

Let $\ket{\psi} \in \overline{L}$ be a quantum state in the code space of the QLDPC code $C$. Suppose the data error is $\be = (\be_x, \be_z) \in \{0, 1\}^{2n}$ with reduced weight $|\be|_{\mathrm{red}} \leq \beta S n + Rn$ and the syndrome error is $\be^{\mathrm{syn}} \in \{0,1\}^l$ with weight $|\be^{\mathrm{syn}}| \leq Sn$. Then,
\begin{equation} \label{eq:ft-syn-ec}
U_{\mathrm{ec}} \left( \ket{\sigma} \otimes \left( \Pi_{\sigma \oplus \be^{\mathrm{syn}}} X^{\be_x} Z^{\be_z} \ket{\psi} \right) \right)
= \alpha_{\sigma + \be^{\mathrm{syn}}, \sigma_\be} \ket{\sigma} \otimes X^{\be'_x} Z^{\be'_z} \ket{\psi},
\end{equation}
where $\sigma \in \{0,1\}^l$, the residual error $(\be'_x, \be'_z)$ satisfies $|\be'|_{\mathrm{red}} \leq \beta S n$, and the factor $\alpha_{\sigma + \be^{\mathrm{syn}}, \sigma_\be}$ is defined as in Eq.~\eqref{eq:pauli-proj}.
\end{theorem}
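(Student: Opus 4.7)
The plan is to reduce the statement to the single-shot decoder theorem of~\cite{gu2024single} (with their free parameter $\alpha$ set to $0$) by a direct translation into our notation. The heart of the argument is the decoder analysis itself, which is cited; everything else is bookkeeping about how the syndrome projector and the classically controlled unitary interact with a Pauli error on a code state.

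First, I would collapse the projector on the quantum register. For any $\ket{\psi}\in\overline{L}$, Eq.~\eqref{eq:pauli-proj} gives
\begin{equation*}
\Pi_{\sigma\oplus\be^{\mathrm{syn}}}\,X^{\be_x}Z^{\be_z}\ket{\psi}
=\alpha_{\sigma\oplus\be^{\mathrm{syn}},\sigma_\be}\,X^{\be_x}Z^{\be_z}\ket{\psi},
\end{equation*}
which vanishes unless $\sigma=\sigma_\be\oplus\be^{\mathrm{syn}}$. Thus, on every nonzero branch, the classical register that drives $U_{\mathrm{ec}}$ holds precisely the corrupted syndrome $\sigma_\be\oplus\be^{\mathrm{syn}}$, which is exactly the decoder input model studied in~\cite{gu2024single}.

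Second, I would invoke the single-shot decoder guarantee of~\cite{gu2024single}: there exist absolute constants $R,S,\beta>0$ with $\beta Sn+Rn<d_{\mathrm{min}}/2$ such that, whenever $|\be|_\mathrm{red}\le\beta Sn+Rn$ and $|\be^{\mathrm{syn}}|\le Sn$, the classical decoder outputs, from the corrupted syndrome alone, a correction Pauli $X^{\bc_x}Z^{\bc_z}$ whose composition with the data error has stabilizer-reduced weight at most $\beta Sn$. The distance bound $\beta Sn+Rn<d_{\mathrm{min}}/2$ is what forces this residual to lie in the correct stabilizer coset, so its reduced weight captures the actual logical action.

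Finally, I would apply $U_{\mathrm{ec}}$: by construction it implements $X^{\bc_x}Z^{\bc_z}$ conditioned on the classical register equalling $\sigma$, and Pauli multiplication gives $X^{\bc_x}Z^{\bc_z}X^{\be_x}Z^{\be_z}=\pm X^{\be_x\oplus\bc_x}Z^{\be_z\oplus\bc_z}$ up to an overall sign and, after reducing modulo stabilizers, up to a stabilizer element that fixes $\ket{\psi}$. Absorbing the sign into $\ket{\psi}$ yields $\ket{\sigma}\otimes X^{\be'_x}Z^{\be'_z}\ket{\psi}$ with $|\be'|_\mathrm{red}\le\beta Sn$, which is Eq.~\eqref{eq:ft-syn-ec}. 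The only genuinely nontrivial ingredient — and the step I would expect to be the main obstacle if one wanted a self-contained proof — is the decoder analysis of~\cite{gu2024single}: one must check that setting their parameter $\alpha=0$ still lets their small-set-flip iteration converge to a residual bounded by $\beta Sn$, which is established in that paper.
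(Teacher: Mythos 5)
Your proposal is correct and matches the paper's treatment: the paper does not prove this theorem itself but presents it as a restatement of Theorem~3.5 of~\cite{gu2024single} with their parameter $\alpha=0$, which is exactly the reduction you carry out, and your bookkeeping (collapsing $\Pi_{\sigma\oplus\be^{\mathrm{syn}}}$ via Eq.~\eqref{eq:pauli-proj} so that only the branch $\sigma=\sigma_\be\oplus\be^{\mathrm{syn}}$ survives, then letting the classically controlled Pauli $U_{\mathrm{ec}}$ apply the decoder's correction) is consistent with how the paper uses the theorem later in Corollary~\ref{cor:single-shot-dec}. The only cosmetic point is that the residual sign from Pauli multiplication should be absorbed into the residual operator $X^{\be'_x}Z^{\be'_z}$ rather than into $\ket{\psi}$, but this does not affect correctness.
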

Note that Eq.~(\ref{eq:ft-syn-ec}) can be stated using the encoding isometery $V: (\mathbb{C}^2)^{\otimes m} \to (\mathbb{C}^2)^{\otimes n}$
   \begin{equation} \label{eq:ft-syn-ec-1}
       U_{ec} \left(\ket{\sigma} \otimes (\Pi_{\sigma \oplus \be^{\mathrm{syn}}} X^{\be_x} Z^{\be_z} V) \right) = \alpha_{\sigma + \be^{\mathrm{syn}}, \sigma_\be} \ket{\sigma} \otimes (X^{\be'_x} Z^{\be'_z}V) ,
   \end{equation}

\medskip In the following corollary, we show that Theorem~\ref{thm:single-shot-dec} can be generalized for arbitrary linear error operators.

\begin{corollary}[Single-Shot decoding for non-Pauli errors] \label{cor:single-shot-dec}
There exist constants $R > 0, S > 0, \beta > 0$ with $\beta Sn + Rn < \frac{d_{\min}}{2}$, such that the following holds: 

Let $\ket{\psi} \in (\mathbb{C}^2)^{\otimes \overline{n}} \otimes \overline{L}$ be a quantum state on the joint space of the code space $\overline{L}$ and a $\overline{n}$-qubit auxiliary system.  Suppose the data error is $E \in \bL((\mathbb{C}^2)^{\otimes \overline{n}} \otimes (\mathbb{C}^2)^{\otimes n})$, with the reduced weight
\begin{equation*}
 |E|_\mathrm{red} \leq \beta S n + Rn < \tfrac{d_{\min}}{2},   
\end{equation*}
 and the syndrome error is  $\be^{\mathrm{syn}} \in \{0, 1\}^{l}$,  $|\be^{\mathrm{syn}}| \leq Sn$. Then,
   \begin{equation}
      (\ident_{\overline{n}} \otimes U_{ec}) (\ket{\sigma} \otimes (\ident_{\overline{n}} \otimes \Pi_{\sigma \oplus \be^{\mathrm{syn}}}) E \ket{\psi})  =  \ket{\sigma}  \otimes (E_{\overline{n}} \otimes X^{\be'_x} Z^{\be'_z}) \ket{\psi},
   \end{equation}
   where $\sigma \in \{0, 1\}^{l}$, $E_{\overline{n}}  \in  \bL((\mathbb{C}^2)^{\otimes \overline{n}}) $ and for $\be' = (\be'_x, \be'_z)$, $|\be'|_\mathrm{red} \leq \beta S n, $. \end{corollary}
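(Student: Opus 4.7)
}
The plan is to reduce the non-Pauli case to the Pauli case by expanding $E$ in the Pauli basis on the code block and applying Theorem~\ref{thm:single-shot-dec} term-by-term. Concretely, using Eq.~\eqref{eq:expansion-k}, write
\begin{equation*}
E = \sum_{\be_x, \be_z \in \{0,1\}^n} E_{\be_x, \be_z} \otimes X^{\be_x} Z^{\be_z},
\end{equation*}
and observe that the reduced-weight hypothesis $|E|_{\mathrm{red}} \leq \beta Sn + Rn$ forces every nonzero coefficient $E_{\be_x,\be_z}$ to be supported on $\be$ satisfying $|\be|_{\mathrm{red}} \leq \beta Sn + Rn < d_{\min}/2$. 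Writing $\ket{\psi} = (\ident_{\overline{n}} \otimes V)\ket{\psi'}$ for some $\ket{\psi'} \in (\mathbb{C}^2)^{\otimes \overline{n}} \otimes (\mathbb{C}^2)^{\otimes m}$, and pushing the projector and $U_{\mathrm{ec}}$ through the sum, I can apply Eq.~\eqref{eq:ft-syn-ec-1} to each term, obtaining
\begin{equation*}
(\ident_{\overline{n}} \otimes U_{\mathrm{ec}})\bigl(\ket{\sigma}\otimes(\ident_{\overline{n}}\otimes \Pi_{\sigma\oplus \be^{\mathrm{syn}}})E\ket{\psi}\bigr) = \ket{\sigma} \otimes \sum_{\be:\,\sigma_\be = \sigma \oplus \be^{\mathrm{syn}}} E_{\be_x,\be_z} \otimes X^{\be'_x(\be)}Z^{\be'_z(\be)}\ket{\psi},
\end{equation*}
where $\be'(\be)$ is the residual from Theorem~\ref{thm:single-shot-dec} and satisfies $|\be'(\be)|_{\mathrm{red}} \leq \beta S n$. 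The projection has killed every $\be$ whose syndrome does not match $\sigma \oplus \be^{\mathrm{syn}}$ via the factor $\alpha_{\sigma+\be^{\mathrm{syn}},\sigma_\be}$ in Eq.~\eqref{eq:ft-syn-ec-1}.

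The critical step is to show that all surviving terms share the \emph{same} residual Pauli up to an overall sign when restricted to code states, so that the sum collapses to a single product $E_{\overline{n}} \otimes X^{\be'_x}Z^{\be'_z}$. To see this, pick any two $\be, \tilde{\be}$ that appear in the sum. They have identical syndrome $\sigma \oplus \be^{\mathrm{syn}}$, so $\be - \tilde{\be}$ has trivial syndrome, and by subadditivity of the reduced weight, $|\be - \tilde{\be}|_{\mathrm{red}} \leq 2(\beta Sn + Rn) < d_{\min}$. A normalizer element of reduced weight strictly less than $d_{\min}$ cannot represent a nontrivial logical operator, so $X^{\be_x-\tilde{\be}_x}Z^{\be_z-\tilde{\be}_z}$ is a stabilizer element up to a phase, acting as $\pm \ident$ on $V\ket{\chi}$. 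The same stabilizer difference relates the residuals $\be'(\be)$ and $\be'(\tilde{\be})$, since both are obtained by multiplying by the common correction $P_{\mathrm{cor}}(\sigma)$ fixed by the decoder. Consequently, on every state of the form $(\ident_{\overline{n}}\otimes V)\ket{\psi'}$, all the operators $X^{\be'_x(\be)}Z^{\be'_z(\be)}$ agree with one canonical representative $X^{\be'_x}Z^{\be'_z}$ up to a global sign $\eta(\be)\in\{\pm 1\}$ independent of $\ket{\psi'}$.

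Collecting the signs into the auxiliary operator by defining
\begin{equation*}
E_{\overline{n}} := \sum_{\be:\,\sigma_\be = \sigma\oplus\be^{\mathrm{syn}},\,|\be|_{\mathrm{red}}\leq \beta Sn + Rn} \eta(\be)\, E_{\be_x,\be_z},
\end{equation*}
yields the claimed identity $\ket{\sigma}\otimes (E_{\overline{n}} \otimes X^{\be'_x} Z^{\be'_z})\ket{\psi}$, with $|\be'|_{\mathrm{red}}\leq \beta Sn$ inherited from Theorem~\ref{thm:single-shot-dec}. I expect the main obstacle to be the bookkeeping in the previous paragraph: showing that the stabilizer equivalence between any two surviving Pauli terms induces a \emph{global} (state-independent) sign after acting on a potentially entangled $\ket{\psi}$, so that the sum genuinely factorizes as an operator on $(\mathbb{C}^2)^{\otimes \overline{n}}$ tensored with a single Pauli on the code block. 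The minimum-distance argument ruling out logical-operator contributions is the heart of why the corollary holds and directly parallels the reason Theorem~\ref{thm:single-shot-dec} works for Pauli errors.
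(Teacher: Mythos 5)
Your proposal is correct and follows essentially the same route as the paper's proof: expand $E$ in the Pauli basis on the code block, use $|E|_{\mathrm{red}} < d_{\min}/2$ to conclude that all terms surviving the syndrome projection are stabilizer-equivalent (no logical operator can separate them), and collapse the sum into a single Pauli on the code block tensored with an operator on the auxiliary system before invoking Theorem~\ref{thm:single-shot-dec}. Your explicit tracking of the signs $\eta(\be)$ absorbed into $E_{\overline{n}}$ is in fact slightly more careful than the paper's definition of $E_{\overline{n},\sigma}$, but the argument is the same.
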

\begin{proof}
Using $\be = (\be_x, \be_z)$, we have 
\begin{align} 
    (\ident_{\overline{n}} \otimes \Pi_{\sigma} ) E \ket{\psi} &= \sum_{\be}   (E_{\be} \otimes \Pi_\sigma X^{\be_x} Z^{\be_z}) \ket{\psi} \nonumber \\
     &= \sum_{\be : \sigma_\be = \sigma}  (E_{\be} \otimes \Pi_\sigma X^{\be_x} Z^{\be_z} ) \ket{\psi} \label{eq:k-proj},
\end{align}
where the first equality uses the expansion of $E$ according to Eq.~(\ref{eq:expansion-k}), and the second equality uses Eq.~\eqref{eq:pauli-proj}. 
Note that two errors with the same syndrome are separated by either a stabilizer generator or a logical operator.  The reduced error weight of a logical operator is at least the minimum distance $d_{\min}$ of the code. Since $|E|_\mathrm{red} < d_{\min}/2$, all the errors $\{\be: \sigma_\be = \sigma\}$ in Eq.~(\ref{eq:k-proj}) are stabilizer equivalent to each other. Therefore, Eq.~(\ref{eq:k-proj}) reduces to  
\begin{equation} \label{eq:k-proj-1}
     (\ident_{\overline{n}} \otimes \Pi_{\sigma} ) E\ket{\psi} =   (   E_{\overline{n}, \sigma} \otimes \Pi_\sigma X^{\be_x} Z^{\be_z}) \ket{\psi},
\end{equation}
where $\be = (\be_x, \be_z), |\be|_\mathrm{red} \leq \beta Sn + Rn$ is such that $\sigma_\be = \sigma$, and $E_{\overline{n}, \sigma} := \sum_{\be : \sigma_\be = \sigma} E_\be$. We now have 
\begin{multline}
      (\ident_{\overline{n}} \otimes U_{ec}) (\ket{\sigma} \otimes (\ident_{\overline{n}} \otimes \Pi_{\sigma \oplus \be^{\mathrm{syn}}}) E \ket{\psi})  =  (\ident_{\overline{n}} \otimes U_{ec})(\ket{\sigma} \otimes (E_{\overline{n}, \sigma + \be^{\mathrm{syn}}} \otimes \Pi_{\sigma \oplus \be^{\mathrm{syn}}} X^{\be_x} Z^{\be_z}) \ket{\psi} ) \nonumber \\  
      =\ket{\sigma} \otimes (E_{\overline{n}, \sigma + \be^{\mathrm{syn}}} \otimes X^{\be'_x} Z^{\be'_z} ) \ket{\psi} \quad \quad \quad \quad \quad \quad ,
\end{multline}
where in the first equality, the syndrome of $\be = (\be_x, \be_z)$ is $\sigma_{\be} = \sigma \oplus \be^{\mathrm{syn}}$ from Eq.~(\ref{eq:k-proj-1}) (note that $U_{ec}$ acts on the syndrome state and the code block) and in the second equality, $|\be'|_\mathrm{red} \leq \beta S n$, using Theorem~\ref{thm:single-shot-dec} (see also Eq.~(\ref{eq:ft-syn-ec})).
\end{proof}
\subsection{Single-shot error correction against general circuit noise}
In this section, we consider the syndrome extraction circuit of a QLDPC code under general circuit noise with parameter $\delta$. We show that single shot error correction, i.e., the error correction based on a single round of noisy syndrome, suppresses the input error on any code state, given that $\delta$ is sufficiently small (see Theorem~\ref{thm:cir-noise-ftec} below). 

\smallskip We know that for QLDPC codes, there exists a syndrome extraction circuit of constant (i.e., independent of $n$) depth~\cite{gottesman2013fault}. Throughout this section, we denote by $\Phi$ the syndrome extraction circuit. Let $d = O(1)$ be the depth of $\Phi$.  The syndrome extraction circuit $\Phi$ contains the following steps (see also Fig.~\ref{fig:syn-ext-circuit}),
\begin{itemize}
\item[1] \textbf{Initialization}: The input to the circuit $\Phi$ corresponds to the $n$ data (physical)  qubits of the QLDPC code. In the first layer, $i = 1$, of the quantum circuit, we initialize $l$ ancilla qubits (one for each generator measurement) in the $\ket{0}$ or $\ket{+}$ state.

\item[2] \textbf{Entanglement}: In layers, $i = 2, \dots, d - 1$, we entangle the data qubits with ancilla qubits by applying $\CNOT$ gates. 

\item[3] \textbf{Measurement}: In the last layer, $i = d$, the ancilla qubits are measured in the Pauli-$Z$ or $X$ basis. 
\end{itemize}

\begin{figure}[!t]
\centering
\begin{tikzpicture}

\tikzset{
  half circle/.style={
      semicircle,
      shape border rotate=90,
      anchor=chord center
      }
}

\tikzset{meter/.append style={draw, inner sep=5, rectangle, font=\vphantom{A}, minimum width=20, line width=.8,
 path picture={\draw[black] ([shift={(.1,.3)}]path picture bounding box.south west) to[bend left=50] ([shift={(-.1,.3)}]path picture bounding box.south east);\draw[black,-latex] ([shift={(0,.1)}]path picture bounding box.south) -- ([shift={(.3,-.1)}]path picture bounding box.north);}}}
 
\draw 
(1.0,0) node[inner xsep=0.3cm, inner ysep=2.5cm, draw, fill=blue!20] (a) {${\cT_2}$}
(3.0,0) node[inner xsep=0.3cm, inner ysep=2.5cm, draw, fill=blue!20] (f) {${\cT_3}$}
(7.5,0) node[inner xsep=0.3cm, inner ysep=2.5cm, draw, fill=blue!20] (g) {${\cT_{d-1}}$}
;

\draw 
(-0.6, 2.3) node[draw, minimum width = 20, inner sep = 5, rectangle, fill=blue!20](b){$\cI$}  
(-0.3, -2.3) node[draw, half circle, minimum size = 3mm, fill=blue!20](c){}
(-0.6, -0.2) node[draw, minimum width = 20, inner sep = 5, rectangle, fill=blue!20](d){$\cI$}
(-0.3, -1.0) node[draw, half circle, minimum size = 3mm, fill=blue!20](e){} 
;

\draw
($0.5*(b.south) + 0.5*(d.north)$) node[rotate = 90] (y) {$\cdots$}
(y) ++(0, 0.6) node[rotate = 90] () 
{$\cdots$}
(y) ++(0, -0.6) node[rotate = 90] () 
{$\cdots$}
($0.5*(c.south) + 0.5*(e.north)$) node[rotate = 90] () {$\cdots$} 
;

\draw
(d.west) to ++(-0.5, 0)
(b.west) to ++(-0.5, 0)
;

\draw
(b) to (b -| a.west)
(c) to (c -| a.west)
(d) to (d -| a.west)
(e) to (e -| a.west)
;

\draw
(b -| a.east) to (b -| f.west)
(c -| a.east) to (c -| f.west)
(d -| a.east) to (d -| f.west)
(e -| a.east) to (e -| f.west)
;

\draw
(b -| f.east) to ++(0.5, 0)
(c -| f.east) to ++(0.5, 0)
(d -| f.east) to ++(0.5, 0)
(e -| f.east) to ++(0.5, 0)
;

\draw
(b -| g.west) to ++(-0.5, 0)
(c -| g.west) to ++(-0.5, 0)
(d -| g.west) to ++(-0.5, 0)
(e -| g.west) to ++(-0.5, 0)
;

\draw
($0.5*(b -| f.east) + 0.5*(b -| g.west)$) node[] (x) {$\cdots$}
(x) ++(0.7, 0) node[] () 
{$\cdots$}
(x) ++(-0.7, 0) node[] () 
{$\cdots$}
;

\draw
($0.5*(c -| f.east) + 0.5*(c -| g.west)$) node[] (x1) {$\cdots$}
(x1) ++(0.7, 0) node[] () 
{$\cdots$}
(x1) ++(-0.7, 0) node[] () 
{$\cdots$}
;

\draw
($0.5*(d -| f.east) + 0.5*(d -| g.west)$) node[] (x2) {$\cdots$}
(x2) ++(0.7, 0) node[] () 
{$\cdots$}
(x2) ++(-0.7, 0) node[] () 
{$\cdots$}
;

\draw
($0.5*(e -| f.east) + 0.5*(e -| g.west)$) node[] (x3) {$\cdots$}
(x3) ++(0.7, 0) node[] () 
{$\cdots$}
(x3) ++(-0.7, 0) node[] () 
{$\cdots$}
;

\draw
(b -| g.east) ++(1.4, 0) node[draw, minimum width = 20, inner sep = 5, rectangle, fill=blue!20](z1){$\cI$}  
(d -| g.east) ++(1.4, 0) node[draw, minimum width = 20, inner sep = 5, rectangle, fill=blue!20](z2){$\cI$}  
;

\draw
(b -| g.east) to (z1.west) 
(z1.east) to ++(0.5, 0)
(d -| g.east) to (z2.west) 
(z2.east) to ++(0.5, 0)
;

\draw
(e -| g.east) ++(1.4, 0) node[meter, fill=blue!20](h){}
(c -| g.east) ++(1.4, 0) node[meter, fill=blue!20](i){}
;

\draw
(e -| g.east) to (h.west) 
(c -| g.east) to (i.west) 
;

\draw[double]
(h.east) to ++(0.4, 0)
(i.east) to ++(0.4, 0)
;

\draw
($0.5*(z1.south) + 0.5*(z2.north)$) node[rotate = 90] (y) {$\cdots$}
(y) ++(0, 0.6) node[rotate = 90] () 
{$\cdots$}
(y) ++(0, -0.6) node[rotate = 90] () 
{$\cdots$}
($0.5*(h.south) + 0.5*(i.north)$) node[rotate = 90] () {$\cdots$} 
;

\draw[decorate, decoration = brace]
($(d) + (-1.0, -0.1)$) -- ($(b) + (-1.0, 0.1)$) 
 ;

\draw[decorate, decoration = brace]
($(c) + (-1.0, -0.3)$) -- ($(e) + (-1.0, 0.3)$) 
 ;

 \draw
($(d) + (-1.2, 1.2)$)  node[rotate = 90](){Data qubits}
;

 \draw
($(c) + (-1.2, 0.6)$)  node[rotate = 90](){Ancilla qubits}
;
\end{tikzpicture} 
\caption{This figure illustrates the channel $\cT_\Phi$ realized by the syndrome extraction circuit $\Phi$. In the first layer, $i = 1$, the ancilla qubits are initialized and in the last layer, $i = d$, they are measured. In layers, $i = 2, \dots, d-1$, the data and ancilla qubits are entangled, using two qubit gates.}
\label{fig:syn-ext-circuit}
\end{figure}
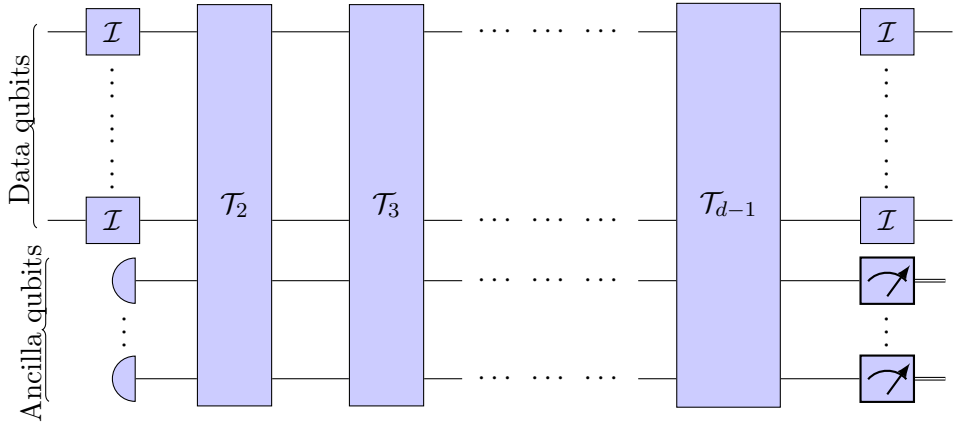
Note that for a QLDPC code, with at least one $X$ and $Z$ type generator of weight bigger than or equal to $2$, it holds $d \geq 4$. We denote the set of data qubits by $D$, and the set of ancilla qubits by $A$. The quantum channel realized by the circuit $\Phi$ is given by,
\begin{equation} \label{eq:syn}
    \cT_{\Phi} := \cT_d \circ \cdots \circ \cT_2 \circ \cT_1,
\end{equation}
where the quantum channel realized by the $i^{th}$ layer is given by
\[ \cT_i := \otimes_{j = 1}^{t_i} \cT_{ij}, i \in [d], \]
with $\cT_{ij}$ being the channel corresponding to the location $(i, j)$, and  $t_i$ being the number of locations (including idle gates) at the $i^{th}$ layer (see also Def.~\ref{def:quantum_circuit}). Note that the following holds,
\begin{equation} \label{eq:t_i-bound}
 n \leq t_i < 2n, \forall i \in [d].    
\end{equation}
In Eq.~(\ref{eq:t_i-bound}), the upper bound follows using that the number of locations at any layer could not exceed $n + l <2n$ (recall that $l < n$ is the number of generator measurements). The lower bound follows from the observation that the maximum number of two qubit locations at any layer is $l$, since each such location acts on a data and an ancilla qubit.

\smallskip Applying general circuit noise with parameter $\delta$ on $\Phi$, we get the set of quantum channels $\trans(\Phi, \delta)$  (see also Section~\ref{sec:noise-qcirc}). Any $\tilde{\cT}_{\Phi} \in \trans(\Phi, \delta)$ can be written as,
\begin{equation} \label{eq:noisy-syn}
    \tilde{\cT}_\Phi = \tilde{\cT}_d \circ \cdots \circ \tilde{\cT}_2 \circ \tilde{\cT}_1,
\end{equation}
where 
\[
  \tilde{\cT}_i = \otimes_{j = 1}^{t_i} \tilde{\cT}_{ij},  
\]
%
with $\tilde{\cT}_{ij}$ such that $\dnorm{\tilde{\cT}_{ij} - \cT_{ij}} \leq \delta$. 

\smallskip In the following, we will use the notation $\cT_{[i]} := \cT_i \circ \cdots \circ \cT_2 \circ \cT_1$ and  $\tilde{\cT}_{[i]} := \tilde{\cT}_i \circ \cdots \circ \tilde{\cT}_2 \circ \tilde{\cT}_1$, for $1 < i \leq d-1$.

\smallskip In the Theorem~\ref{thm:cir-noise-ftec} below, we consider parallel error correction across multiple code blocks of the QLDPC code, as needed in the proof of Theorem~\ref{thm:main-cons-over}. We show that if the reduced weight of the input error is bounded in each code block, the single shot error correction suppresses the input error in each code block, even with general circuit noise applied to the syndrome extraction circuit. Consequently, errors can be suppressed in all the code blocks in parallel. It is worth emphasizing that the input error can be correlated across different code blocks; the only requirement of Theorem~\ref{thm:cir-noise-ftec} is that its weight with respect to each code block is bounded.

\begin{theorem} \label{thm:cir-noise-ftec}
    There exists constants $R, S, \beta > 0$, satisfying $\beta Sn + Rn < \frac{d_{\min}}{2}$, such that the following holds:
    
  Let $\delta$ be the error rate, satisfying $5\delta 2^{d+1} \leq S$, with $d$ being the depth of the syndrome extraction circuit $\Phi$. Let $\kappa := ed$ with $e := \exp(1)$, and $h$ be a positive integer, such that $h\kappa e \exp(-\frac{n\delta}{3}) \leq 1$. Consider parallel error correction of the $h$ code blocks of the QLDPC code $C$. For the $i^{th}$ code block, let $(D_i, A_i)$ denote the data and ancilla qubits, $\overline{L}^i$ denote the code space and $\Phi^i$ denote the syndrome extraction circuit. Let $\cV$ be a superoperator with the reduced weight $|\cV|_\mathrm{red} \leq \beta S n + Rn$ with respect to all the code blocks $i \in [h]$, and let $\ket{\psi} \in \otimes_i \overline{L}^i$ be any code state. Then, for any $\tilde{\cT}_{\Phi^i} \in \trans(\Phi^i, \delta), i \in [h]$,
\begin{equation} \label{eq:circ-noise-reduce}
\| \left(\otimes_{i \in [h]} (\tr_{A_i} \circ \: \cU^i_{ec} \circ \tilde{\cT}_{\Phi^i})\right) \left(\cV (\proj{\psi})\right) -  \cV' (\proj{\psi})  \|_1 < h\kappa e \exp(-\frac{n\delta}{3}),
\end{equation} 
where  $\cU^i_{ec}$ corresponds to the error correction channel on the $i^{th}$ block, $\tr_{A_i}$ refers to tracing out ancilla qubits in $A_i$, and $\cV' $ is a superoperator on $h$ code blocks, satisfying $|\cV'|_{\mathrm{red}} \leq \beta S n  + 5 n \delta 
2^{d+1}$, with respect to all the code blocks $i \in [h]$.
\end{theorem}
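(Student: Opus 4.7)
The plan is to reduce the noisy syndrome extraction together with the perfect correction $\cU^i_{ec}$ to the idealized setting of Corollary~\ref{cor:single-shot-dec}, by extracting from the noisy circuit an effective data-qubit operator and an effective syndrome error, and then invoking the single-shot decoding guarantee. By the tensor product structure of the block-wise operations and subadditivity of the diamond norm, I first analyze a single block and then union-bound over the $h$ blocks; the factor $h$ in the tail of~\eqref{eq:circ-noise-reduce} reflects precisely this union bound.

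For a single block, write each noisy location as $\tilde{\cT}_{ij} = \cT_{ij} + \cE_{ij}$ with $\dnorm{\cE_{ij}} \leq \delta$ and expand $\tilde{\cT}_{\Phi^i} = \sum_F \cT^F$, where $F$ ranges over subsets of the locations of $\Phi^i$ and $\cT^F$ uses $\cE_{ij}$ at $(i,j) \in F$ and $\cT_{ij}$ elsewhere, so $\dnorm{\cT^F} \leq \delta^{|F|}$. Grouping by $|F|$ and applying the multiplicative Chernoff bound exactly as in the proof of Lemma~\ref{lem:bound-adv} (using $|\Phi^i| \leq 2nd$ from Eq.~\eqref{eq:t_i-bound}), the contribution of fault patterns with $|F| > t^{\ast} := 5 |\Phi^i| \delta / (1+\delta)$ has diamond norm at most $\kappa \exp(-n\delta/3)$, which after the block-wise union bound becomes the tail in~\eqref{eq:circ-noise-reduce}. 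For the typical part $|F| \leq t^{\ast}$, I commute every fault operator $\cE_{ij}$ forward through the subsequent perfect $\CNOT$s to just before the measurement layer. Since $\Phi^i$ has constant depth $d$ and each two-qubit gate at most doubles the support of an operator it touches, every pushed fault splits into a data-qubit operator and an ancilla-qubit operator whose reduced weights are each bounded by $t^{\ast}\cdot 2^{d-1} \leq 5n\delta\, 2^{d+1}$; upon measurement the latter becomes a syndrome error $\be^{\mathrm{syn}}_F$ of Hamming weight at most $5n\delta\, 2^{d+1} \leq Sn$ by the hypothesis on $\delta$.

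After the push, the net action on the data qubits for each fault pattern $F$ is to apply the input error $\cV$ composed with the pushed data-qubit operator, then project with the syndrome $\sigma \oplus \be^{\mathrm{syn}}_F$, and finally apply the perfect unitary correction $\cU^i_{ec}$. Using Corollary~\ref{cor:single-shot-dec} with the auxiliary $\overline{n}$-qubit system playing the role of all other code blocks, the Pauli residual produced by the decoder has reduced weight at most $\beta S n$; the uncorrected pushed data operator of reduced weight at most $5n\delta\, 2^{d+1}$, which is not detected by the single round of syndrome, is then appended on the output side. Reassembling the sum over typical $F$ (and then over the $h$ blocks) yields an output superoperator $\cV'$ with $|\cV'|_{\mathrm{red}} \leq \beta S n + 5n\delta\, 2^{d+1}$ with respect to every block, as claimed. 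The main obstacle I anticipate is bookkeeping the propagation carefully enough that the combined reduced weight of $\cV$ and the pushed operator that enters the projector remains strictly below the decoding radius $d_{\min}/2$ required by Corollary~\ref{cor:single-shot-dec}; this is where the interplay between $R$, $S$, $\beta$ and the hypothesis $5\delta\, 2^{d+1} \leq S$ matters, and one must also carefully separate faults occurring at the measurement layer (which cannot be pushed through any further gate) from earlier faults whose effect feeds simultaneously into the data and the syndrome.
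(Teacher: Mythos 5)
Your proposal follows essentially the same route as the paper's proof: expand the noisy circuit over fault patterns, truncate the high-weight tail via a Chernoff bound, push the surviving faults through the constant-depth circuit to just before the measurement layer, reinterpret the ancilla part of the pushed error as a syndrome bit-flip, invoke Corollary~\ref{cor:single-shot-dec} with the other code blocks playing the role of the auxiliary system, and union-bound over the $h$ blocks. This is precisely the content of Lemma~\ref{lem:pushing-errs} combined with the paper's proof of Theorem~\ref{thm:cir-noise-ftec}, including your two flagged subtleties (measurement-layer faults must be commuted in the opposite direction, and a general ancilla operator becomes a classical syndrome flip only after a Pauli expansion in which the $Z$-part contributes mere phases), both of which the paper handles explicitly via Naimark dilation and the Pauli expansion of the pushed operator $W$.

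One quantitative step does not check out as written. You truncate the fault set globally at $t^{\ast} = 5|\Phi^i|\delta/(1+\delta)$ with $|\Phi^i| \leq 2nd$, so $t^{\ast}$ can be as large as $10nd\delta$; after the worst-case light-cone blow-up $2^{d-1}$ this gives a pushed weight of order $5nd\delta\,2^{d}$, whereas your claimed inequality $t^{\ast}\cdot 2^{d-1} \leq 5n\delta\,2^{d+1}$ would require $d \leq 2$ (and here $d \geq 4$). The paper avoids this extra factor of $d$ by truncating layer by layer, at most $5t_i\delta < 10n\delta$ faults per layer, and summing the geometric series of light-cone expansions $5n\delta(2^2+\cdots+2^{d}) = 5n\delta(2^{d+1}-4)$, plus $10n\delta$ from the measurement layer, which is exactly $5n\delta\,2^{d+1}$. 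Since $d$ is a constant this does not affect the qualitative conclusion, but to obtain the theorem with the stated constant (and hence the stated threshold condition $5\delta\,2^{d+1}\leq S$) you need the per-layer bookkeeping rather than a single global truncation.
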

Note that if $\delta < \frac{R}{5 \times 2^{d+1}}$, then $|\cV'|_{\mathrm{red}} < |\cV|_{\mathrm{red}} $ in Theorem~\ref{thm:cir-noise-ftec}. Moreover, given that $h = \mathrm{poly}(n)$, the right hand side of Eq.~(\ref{eq:circ-noise-reduce}) approaches zero as $n$ goes to infinity. Therefore, the error weight can be effectively suppressed from $\beta S n + Rn$ to $\beta S n  + 5 n \delta 
2^{d+1}$, by choosing sufficiently large $n$. 

\smallskip Before proving Theorem~\ref{thm:cir-noise-ftec}, we provide Lemma~\ref{lem:pushing-errs}, which shows that for syndrome extraction circuit $\Phi$, any $\tilde{\cT}_\Phi \in \trans(\Phi, \delta)$ can be approximated by a superoperator $\overline{\cT}_\Phi  =   \cT_{\Phi_d} \circ \cN \circ \cT_{\Phi_{[d-1]}}$, where $\cN$ is a superoperator of weight $O(n\delta)$. Therefore, the errors that happen in the circuit $\Phi$ can be pushed just before the single qubit measurements are applied on the ancilla qubits. Intuitively, the weight of $\cN$ is $O(n\delta)$ because the constant depth of the circuit $\Phi$ limits the propagation of errors.
\begin{lemma}  \label{lem:pushing-errs}
For any $n$ such that $\exp(-\frac{n\delta}{3}) \leq \frac{1}{\kappa}, \kappa = ed$, and any $\tilde{\cT}_\Phi \in \trans(\Phi, \delta)$, there exists a superoperator $\overline{\cT}_\Phi$  such that 
    \begin{equation} \label{eq:T-bar-phi}
        \dnorm{\tilde{\cT}_\Phi - \overline{\cT}_\Phi} < \kappa \exp(-\frac{n\delta}{3}),
    \end{equation}
and,
\begin{equation} \label{eq:T-bar-phi-1}
   \overline{\cT}_\Phi  =   \cT_{d} \circ \cN \circ \cT_{[d-1]}, 
\end{equation}
where $\cN$ is a superoperator of weight $5n\delta 2^{d+1}$. 
\end{lemma}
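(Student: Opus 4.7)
The plan is to expand each noisy layer into an ideal part plus local perturbations, propagate each perturbation forward through the subsequent ideal layers so that all the accumulated noise sits immediately before the final measurement layer $\cT_d$, and then discard the high-weight configurations via a Chernoff-type tail bound analogous to the proof of Lemma~\ref{lem:bound-adv}. Concretely, at every location $(i,j)$ write $\tilde\cT_{ij} = \cT_{ij} + \cF_{ij}$ with $\dnorm{\cF_{ij}} \le \delta$, so that $\tilde\cT_i = \sum_{S_i \subseteq [t_i]} \cT_{i,S_i^c}\otimes\cF_{i,S_i}$ and, composing across layers,
\[
\tilde\cT_\Phi \;=\; \sum_{\vec S=(S_1,\ldots,S_d)}\bigcirc_{i=d}^{1}\bigl(\cT_{i,S_i^c}\otimes \cF_{i,S_i}\bigr).
\]
Each summand has diamond norm at most $\delta^{|\vec S|}$, where $|\vec S|=\sum_i|S_i|$, and for each $k$ there are at most $\binom{N}{k}$ summands with $|\vec S|=k$, where $N=\sum_i t_i < 2nd$ by the bound $t_i<2n$.

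The crux is to rewrite each summand in the target form $\cT_d\circ \cE_{\vec S}\circ \cT_{[d-1]}$. For a perturbation at a middle-layer location $2\le i\le d-1$, the gate $\cT_{ij}$ is a one- or two-qubit Clifford (CNOT, Hadamard, or identity), so the local perturbation $\cF_{ij}$ may be absorbed into a correction sitting just after $\cT_i$ via $\cF_{ij}=(\cF_{ij}\circ\cT_{ij}^{-1})\circ\cT_{ij}$, and that correction is then commuted rightwards through each subsequent ideal layer by conjugation---each pass at most doubles its qubit support, by the light-cone argument of Fig.~\ref{fig:light-cone}. A layer-$1$ perturbation (noisy ancilla initialization) is rewritten as a weight-$1$ correction applied to the freshly prepared ancilla and then propagated in the same way; a layer-$d$ perturbation is handled dually, being traded for an ideal measurement preceded by a weight-$1$ correction that already sits at the cut. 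Since an original perturbation has qubit-support at most $2$ and traverses at most $d-i$ subsequent ideal layers, the propagated correction has qubit-weight at most $2^{d-i+1}\le 2^{d+1}$, so $\cE_{\vec S}$ has qubit-weight at most $2^{d+1}\,|\vec S|$.

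Setting $t:=5n\delta$ and $\cN:=\sum_{|\vec S|\le t}\cE_{\vec S}$ gives $\overline{\cT}_\Phi:=\cT_d\circ\cN\circ\cT_{[d-1]}$ with $\cN$ of weight at most $t\cdot 2^{d+1}=5n\delta\cdot 2^{d+1}$. The truncation error satisfies
\[
\dnorm{\tilde\cT_\Phi - \overline{\cT}_\Phi}\le\sum_{k>t}\binom{N}{k}\delta^k,
\]
and the Chernoff computation from the proof of Lemma~\ref{lem:bound-adv}, applied with $N$ Bernoulli trials of success probability $\delta/(1+\delta)$ and threshold $t=5n\delta$, yields---after absorbing $(1+\delta)^N\le e^{N\delta}\le e^{2nd\delta}$ against the Chernoff decay---the advertised bound $\kappa\exp(-n\delta/3)$ with $\kappa=ed$. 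The main obstacle will be the second step: making the commutation of perturbations precise at the non-unitary layers $\cT_1$ and $\cT_d$. Because these layers are not invertible one cannot literally conjugate by them; one must exploit the explicit product-state output of $\cT_1$ and the projective structure of $\cT_d$ to produce weight-$1$ local corrections that \emph{can} then be propagated through the unitary middle layers, and keeping the weight factor at $2^{d+1}$ (rather than a strictly larger expression such as $d\cdot 2^d$) depends on these corrections being of constant weight.
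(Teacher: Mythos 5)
Your overall architecture --- expand each noisy location as ideal plus perturbation, push the perturbations forward through the light cone so they sit just before the measurement layer, handle the non-unitary first and last layers by prepare-then-perturb and by factoring a noisy measurement as a pre-processing channel followed by the ideal measurement, then truncate high-weight configurations with a Chernoff tail --- is exactly the strategy of the paper's proof. However, there is a genuine quantitative gap in your truncation step. You truncate the \emph{total} number of perturbed locations across the whole circuit at $t=5n\delta$, while the total number of locations is $N=\sum_i t_i$, which can be as large as $2nd$. The tail $\sum_{k>t}\binom{N}{k}\delta^k$ equals $(1+\delta)^N\,\Pr(\mathrm{Bin}(N,\tfrac{\delta}{1+\delta})>t)$, and its mean is $\approx N\delta\approx 2nd\delta$. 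Since the paper notes $d\geq 4$ for any nontrivial QLDPC code, your threshold $5n\delta$ sits \emph{below} the mean, so the binomial tail is $\Omega(1)$ rather than exponentially small; the multiplicative Chernoff bound simply does not apply, and the claimed error $\kappa\exp(-n\delta/3)$ does not follow. Raising $t$ to $5N\delta=\Theta(nd\delta)$ would repair the tail but inflate the weight of $\cN$ to $\Theta(nd\delta\,2^{d+1})$, which exceeds the $5n\delta\,2^{d+1}$ asserted in the lemma by a factor of order $d$.

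The paper avoids this by truncating \emph{layer by layer}: each $\tilde{\cT}_i$ is approximated by $\overline{\cT}_i$ keeping at most $5t_i\delta$ perturbed locations in that layer (now the threshold exceeds the per-layer mean $t_i\delta$, so each layer's tail is $\leq\exp(-t_i\delta/3)\leq\exp(-n\delta/3)$), and the $d$ per-layer approximation errors are chained by a telescoping triangle inequality, which is precisely where the prefactor $\kappa=ed$ comes from. The per-layer accounting also recovers the stated weight: a layer-$i$ perturbation set of size $\leq 5t_i\delta$ has support $\leq 10t_i\delta$ and grows by a factor $2$ per subsequent layer, so layer $i$ contributes at most $5n\delta\,2^{d-i+2}$ to the final support, and the geometric sum $\sum_{i=1}^{d-1}5n\delta\,2^{d-i+2}+10n\delta=5n\delta(2^{d+1}-4)+10n\delta\leq 5n\delta\,2^{d+1}$ gives the advertised weight, whereas your uniform $2^{d+1}$ bound per perturbation discards this geometric saving. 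Your treatment of the initialization and measurement layers, while only sketched, matches the paper's (the measurement factorization is done via a Naimark dilation), so the truncation is the one point that genuinely needs repair.
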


\begin{proof}
We consider the first $d-1$ layers of  $\Phi$ (i.e., $\cT_{[d-1]} = \cT_{d-1} \circ \cdots \circ \cT_2 \circ \cT_1$) and the last layer (i.e., $\cT_{d}$), separately (see also Fig.~\ref{fig:syn-ext-circuit}). We approximate any $\tilde{\cT}_{[d-1]}$ by a superoperator $\overline{\cT}_{[d-1]} = \cN_{[d-1]} \circ \cT_{[d-1]}$, and $\tilde{\cT}_{d}$ by a superoperator $\overline{\cT}_{d} =  \cT_{d} \circ \cN_d$, where the weights of $\cN_{[d-1]}$ and $\cN_d$ are given by $5n\delta (2^{d + 1} - 4)$, and $10n\delta$ respectively. We then take $\overline{\cT}_{\Phi} :=  
\overline{\cT}_{d} \circ \overline{\cT}_{[d-1]}$, which satisfies Eq.~(\ref{eq:T-bar-phi}) and Eq.~(\ref{eq:T-bar-phi-1}).

\smallskip For each layer $i$ of the circuit, we do the following expansion,
    \begin{align}
        \tilde{\cT}_{i} &= \otimes_{j = 1}^{t_i} (\cT_{ij} + (\tilde{\cT}_{ij} - \cT_{ij}))  \nonumber \\
        &= \sum_{B \subseteq [t_i]} (\otimes_{j \in [t_i]\setminus B} \cT_{ij}) \otimes (\otimes_{j \in B} (\tilde{\cT}_{ij} - \cT_{ij})) \nonumber
    \end{align}
For $s > 0$, we define a superoperator $\overline{\cT}_{i}$ of weight $s$, as follows,
\begin{equation} \label{eq:T-i-bar}
    \overline{\cT}_{i} := \sum_{B \subseteq [t_i], |B| \leq s} (\otimes_{j \in [t_i]\setminus B } \cT_{ij}) \otimes (\otimes_{j \in B} (\tilde{\cT}_{ij} - \cT_{ij}))
\end{equation}
Note that
\begin{equation}
    \dnorm{\tilde{\cT}_i - \overline{\cT}_i} \leq \sum_{j > s}^{t_i} \binom{t_i}{j} \delta^j.
\end{equation}
Then, by doing a similar calculation as in Lemma~\ref{lem:bound-adv}, we have that for $s = 5 t_i \delta$ (using $t_i \geq n$)
\begin{equation}
    \dnorm{\tilde{\cT}_{i} - \overline{\cT}_{i}} \leq \exp(-\frac{t_i\delta}{3}) \leq \exp(-\frac{n\delta}{3}). 
\end{equation} 
For $\overline{\cT}_{[d - 1]} = \overline{\cT}_{d-1} \circ  \cdots \circ \overline{\cT}_{1}$,  we have 
\begin{align}
    \dnorm{ \tilde{\cT}_{[d-1]} -  \overline{\cT}_{[d-1]}} & \leq \dnorm{  \tilde{\cT}_{[d-1]} - \overline{\cT}_{d-1} \circ \tilde{\cT}_{[d-2]} } +  \dnorm{ \overline{\cT}_{d-1} \circ \tilde{\cT}_{[d-2]} -  \overline{\cT}_{[d-1]}} \nonumber\\
    &\leq \dnorm{\tilde{\cT}_{d-1} - \overline{\cT}_{d-1}} \dnorm{\tilde{\cT}_{[d-2]}} + \dnorm{\overline{\cT}_{d-1}} \dnorm{ \tilde{\cT}_{[d-2]} -  \overline{\cT}_{[d-2]}} \nonumber \\
    &\leq \exp(-\frac{n\delta}{3}) + \left(1 + \exp(-\frac{n\delta}{3})\right) \dnorm{ \tilde{\cT}_{[d-2]} -  \overline{\cT}_{[d-2]}}, \nonumber
\end{align}
where the first inequality uses the triangle inequality, the second inequality uses $\dnorm{\cT_2 \circ \cT_1} \break \leq \dnorm{\cT_2} \dnorm{\cT_1}$, and the last inequality uses $\dnorm{ \tilde{\cT}_{[d-2]}} = 1$ and             $\dnorm{\overline{\cT}_{d-1}} \leq \dnorm{\overline{\cT}_{d-1} - \tilde{\cT}_{d-1}} + \dnorm{\tilde{\cT}_{d-1}} \leq \left(\exp(-\frac{n\delta}{3}) + 1 \right)$. Applying the above recursively, we get
\begin{align} \label{eq:apprx-d-1}
     \dnorm{ \tilde{\cT}_{[d-1]} -  \overline{\cT}_{[d-1]}} &\leq (1 + (1 + \exp(-\frac{n\delta}{3})) + \cdots + (1 + \exp(-\frac{n\delta}{3}))^{d-2}) \exp(-\frac{n\delta}{3}) \nonumber\\
     & <   e(d-1)  \exp(-\frac{n\delta}{3}),
\end{align}
where in the second inequality, we have used $ \exp(-\frac{n\delta}{3}) \leq  \frac{1}{ed} < \frac{1}{d}$, which implies $\ln(1 + \exp(-\frac{n\delta}{3})) < \frac{1}{d}$, and in turn,  $(1 + \exp(-\frac{n\delta}{3}))^{d} < e $.

\smallskip Note from Eq.~(\ref{eq:T-i-bar}) that $ \overline{\cT}_{i}$ can be written as 
\begin{equation}
 \overline{\cT}_i = \sum_{B \subseteq [t_i],  |B| \leq s} (-1)^{|B| - |F|}\sum_{F \subseteq B} (\otimes_{j \in [t_i]\setminus F } \cT_{ij}) \otimes (\otimes_{j \in F} \tilde{\cT}_{ij}).  
\end{equation}
In other words, $\overline{\cT}_i $ can be written as a linear combination of the superoperators
\begin{equation}
    (\otimes_{j \in [t_i]\setminus F } \cT_{ij}) \otimes (\otimes_{j \in F} \tilde{\cT}_{ij}), \: |F| \leq s = 5 t_i \delta.
\end{equation} 
Since $\cT_{ij}, \forall i \in [d-1]$ corresponds to either a unitary or a state preparation (all the measurements are performed at the last layer), we have that
\begin{equation}
 \tilde{\cT}_{ij} = \cN_{ij} \circ \cT_{ij},
\end{equation}
where $\cN_{ij}$ is a quantum channel. Note that $\cN_{ij} = \tilde{\cT}_{ij} \circ \cT^\dagger_{ij}$ if $\cT_{ij}$ is a unitary and $\cN_{ij} =  \tilde{\cT}_{ij} \circ \tr$ if $\cT_{ij}$ is a state preparation channel, preparing a single qubit state. Let $F'$ be the set of qubits on which channels $\otimes_{j \in F} \cT_{ij}$ act.  Note that $|F'| \leq 2 |F|$ as $\Phi$ contains only single and two qubit gates. Let $\cN_{i, F'} := (\otimes_{j \in [t_i]\setminus F} \cI_{ij}) \otimes (\otimes_{j \in F} \cN_{ij})$, which has weight $|F'| \leq 2 |F| \leq 10t_i \delta$. We now have
\begin{align}
 (\otimes_{j \in [t_i]\setminus F } \cT_{ij}) \otimes (\otimes_{j \in F} \tilde{\cT}_{ij}) &=  \cN_{i,F'} \circ (\otimes_{j \in [t_i] } \cT_{ij}) \\
 &= \cN_{i,F'} \circ \cT_{i},
\end{align}
Therefore, we have $\overline{\cT}_i = \cN_i \circ \cT_i$, where $\cN_i$ is a superoperator of weight $10 t_i \delta$ (as it is linear combination of channels $\cN_{i, F'}$).

\medskip Consider the set of qubits $F'$ at the layer $i$. For $j > i$, let $\mathrm{Shade} (F', j)$ be the set of qubits after the $j^{th}$ layer of the quantum circuit $\Phi$ that depend on the qubits in $F'$. In other words, $\mathrm{Shade} (F', j)$ is the set of qubits in the future light cone of the set $F'$ after applying layers $i+1, \dots, j$. As $\Phi$ contains only single and two qubit gates, we have $\mathrm{Shade} (F', i+1) \leq 2 |F'|$. Therefore, 
\begin{equation} \label{eq:shade-cal}
 |\mathrm{Shade} (F', j)| \leq 2^{j-i} |F'| \leq 2^{j-i+1} |F| \leq 2^{j-i+1} 5t_i \delta < 5n\delta 2^{j-i+2},   
\end{equation}
where the second inequality uses $|F'| \leq 2 |F|$ and the last inequality uses $t_i \leq n + l < 2n$.

\smallskip Using the unitarity of $\cT_{i}, i \in [d-1]$, we can push $\cN_{i, F'}$ to the left as follows,
\begin{align} 
   \cT_{d-1} \circ \cdots \circ \cT_{i+1}  \circ (\cN_{i, F'} \circ \cT_{i})\circ \cdots  \circ \cT_{1} &=  \cT_{d-1} \circ \cdots \circ  \cN'_{i, \mathrm{Shade} (F', i + 1)} \circ \cT_{i+1}  \circ \cT_{i}\circ \cdots  \circ \cT_{1}  \nonumber\\ 
   &=  \cN'_{i, \mathrm{Shade} (F', d-1)} \circ \cT_{[d-1]}   \label{eq:push-err-i},
\end{align}
where in the first equality $\cN'_{i, \mathrm{Shade} (F', i + 1)} = \cT_{i+1} \circ \cN_{i, F'} \circ \cT_{i+1}^\dagger$ acts trivially except on $\mathrm{Shade} (F', i + 1)$. Similarly, in the second equality $\cN'_{i, \mathrm{Shade} (F', d-1)}$ is a quantum channel that acts non-trivially only on $\mathrm{Shade} (F', d-1)$, where $|\mathrm{Shade} (F', d-1)| \leq 2^{d-i+1} 5n \delta$ from Eq.~\eqref{eq:shade-cal}. We now apply Eq.~(\ref{eq:push-err-i}) for each layer $i = d-1, \dots, 1$ in that order and get the following,
\begin{equation} \label{eq:err-d-1}
    \overline{\cT}_{[d-1]} = \cN_{[d-1]} \circ \cT_{[d-1]}, 
\end{equation}
where $\cN_{[d-1]}$ is a linear combination of quantum channels that act non-trivially on at most $5 n \delta (2^2 + 2^3 + \cdots + 2^{d}) = 5n\delta (2^{d+1} - 4)$ qubits. Therefore, the weight of $\cN$ is given by $5n\delta (2^{d+1} - 4)$.

\medskip We now consider the $d^{th}$ layer of the circuit $\Phi$. We have $\cT_{d} = (\otimes_{j \in D} \cI_j ) \otimes (\otimes_{j \in A} \cM_j)$, where $\cM_j$ corresponds to a single qubit measurement in Pauli-$Z$ or $X$ basis (recall that $D$ is the set of data qubits with $|D| = n$ and $A$ is the set of ancilla qubits  with $|A| = l$). Consider now $\tilde{\cT}_{d}$ and the corresponding approximation $\overline{\cT}_d$ from Eq.~(\ref{eq:T-i-bar}), obtained by using $t_d = n + l$ ,
\begin{multline} \label{eq:approx-measure}
    \overline{\cT}_{d} = \sum_{ \substack{B_1 \subseteq D, B_2 \subseteq A \\ |B_1| + |B_2| \leq 5 (n+l) \delta }}  (\otimes_{j \in D \setminus B_1} \cI_{j}) \otimes (\otimes_{j \in A \setminus B_2} \cM_{j}) \\ \otimes (\otimes_{j \in B_1} (\tilde{\cI}_{j} - \cI_{j})) \otimes (\otimes_{j \in B_2} (\tilde{\cM}_{j} - \cM_{j})) \nonumber,
\end{multline}
which satisfies,
\begin{equation} \label{eq:apprx-d}
    \dnorm{\tilde{\cT}_{d} - \overline{\cT}_{d}} \leq \exp(-\frac{n\delta}{3}). 
\end{equation}
The superoperator $\overline{\cT}_{d}$ can be written as the linear combination of the following channels for a subset of qubits $F_1 \subseteq D, F_2 \subseteq A$, such that $|F_1| + |F_2| \leq 5(n+l)\delta < 10 n\delta$,  
\begin{equation}
    (\otimes_{j \in D \setminus F_1} \cI_{j}) \otimes (\otimes_{j \in A \setminus F_2} \cM_{j}) \otimes (\otimes_{j \in F_1} \tilde{\cI}_{j}) \otimes (\otimes_{j \in F_2} \tilde{\cM}_{j}).
\end{equation}
Consider single qubit measurement $\cM_j: \bL(\mathbb{C}^2) \to \mathbb{B}$ in the computational basis defined by $\cM_j(\cdot) = \sum_{a \in \{0, 1\}} \tr(\proj{a} \cdot) \proj{a}$. The noisy measurement $\tilde{\cM}_j : \bL(\mathbb{C}^2) \to \mathbb{B}$ corresponds to a quantum channel 
\begin{equation}
 \tilde{\cM}_j (\cdot) = \sum_{a \in \{0, 1\}} \tr(P_a \: \cdot) \proj{a},   
\end{equation}
where $P_a \geq 0$, $\sum_a P_a = \ident$. It can be seen that 
\begin{equation}
\tilde{\cM}_j = \cM_j \circ \cN_j,   
\end{equation}
where the channel $\cN_j$ is defined using qubit systems $A$ and $A'$ as below (see also Naimark's theorem~\cite[Theorem~2.42]{watrous2018theory})
\begin{equation}
  \cN_j = \tr_{A}(W_{A \to AA'} (\cdot) W^\dagger_{A \to AA'}),  \:  W_{A \to AA'} := \sum_{a \in \{0, 1\}}  \sqrt{P_a} \otimes \ket{a}_{A'}.
\end{equation}
Note that $W_{A \to AA'}$ is an isometry. In a similar way, for Pauli-$X$ basis measurement, we have $\tilde{\cM}_j = \cM_j \circ \cN_j$ for a quantum channel $\cN_j$. Therefore,
\begin{align}
   & (\otimes_{j \in D \setminus F_1} \cI_{j}) \otimes (\otimes_{j \in A \setminus F_2} \cM_{j}) \otimes (\otimes_{j \in F_1} \tilde{\cI}_{j}) \otimes (\otimes_{j \in F_2} \tilde{\cM}_{j})  \nonumber\\
   & =   (\otimes_{j \in D} \cI_i ) \otimes (\otimes_{j \in A} \cM_i) \circ \cN_{d, F_1, F_2} \nonumber \\
   &= \cT_{d} \circ \cN_{d, F_1, F_2},
\end{align}
where $\cN_{d, F_1, F_2} :=     (\otimes_{j \in D \setminus F_1} \cI_{j}) \otimes (\otimes_{j \in A \setminus F_2} \cI_{j}) \otimes (\otimes_{j \in F_1} \tilde{\cI}_{j}) \otimes (\otimes_{j \in F_2} \cN_{j}) $. Therefore, we have  
\begin{equation} \label{eq:err-d}
    \overline{\cT}_{d} = \cT_{d} \circ \cN_d,
\end{equation}
where $\cN_d$ is a superoperator of weight $10n\delta$. Finally, from Eq.~(\ref{eq:apprx-d-1}) and Eq.~(\ref{eq:apprx-d}), we get 
\begin{align}
\dnorm{\tilde{\cT}_\Phi - \overline{\cT}_\Phi} &\leq \dnorm{\overline{\cT}_{d}} \dnorm{\tilde{\cT}_{[d-1]} - \overline{\cT}_{[d-1]}} + \dnorm{\tilde{\cT}_{d} - \overline{\cT}_{d}} \nonumber \\
& \leq (1 + \exp(-\frac{n\delta}{3})) e (d-1) \exp(-\frac{n\delta}{3}) + \exp(-\frac{n\delta}{3})  \nonumber \\
& < ed \exp(-\frac{n\delta}{3}),
\end{align}
where the second line uses Eq.~(\ref{eq:apprx-d-1}) and Eq.~(\ref{eq:apprx-d})  and $\dnorm{\overline{\cT}_{d}} \leq  (1 + \exp(-\frac{n\delta}{3}))$, and the third line uses $1 + \exp(-\frac{n\delta}{3}) \leq 1 + \frac{1}{ed}  < 1 + \frac{1 - \frac{1}{e}}{d-1} = \frac{d - \frac{1}{e}}{d-1}$ (Recall that $d > 1$).

\smallskip Therefore, $\overline{\cT}_{\Phi} = \cT_{d} \circ \cN \circ \cT_{[d-1]}$, where $\cN := \cN_d \circ \cN_{[d-1]}$ is a quantum channel of weight $5n\delta 2^{d+1}$. Hence, Eq.~(\ref{eq:T-bar-phi}) and Eq.~(\ref{eq:T-bar-phi-1}) hold.
\end{proof}

\medskip We now prove Theorem~\ref{thm:cir-noise-ftec}. \newline

\subsubsection{Proof of Theorem~\ref{thm:cir-noise-ftec}}
Similar to Eq.~(\ref{eq:apprx-d-1}), using the triangle inequality, and the equality $\dnorm{\cT_1 \otimes\cT_2} = \dnorm{\cT_1} \dnorm{\cT_2}$ for superoperators $\cT_1, \cT_2$~\cite{kitaev}, we get from Eq.~(\ref{eq:T-bar-phi}) 
\begin{align}
    &\dnorm{(\otimes_{i \in [h]} \tilde{\cT}_{\Phi^i}) -  (\otimes_{i \in [h]} \overline{\cT}_{\Phi^i})} \nonumber\\
    & \leq \left(1 + (1 + \kappa \exp(-\frac{n\delta}{3})) + \dots + (1 + \kappa \exp(-\frac{n\delta}{3}))^{h-1} \right) \kappa \exp(-\frac{n\delta}{3}) \nonumber \\
    & \leq h (1 + \kappa \exp(-\frac{n\delta}{3}))^{h}  \: \kappa \exp(-\frac{n\delta}{3}) \nonumber \\
    & \leq h\kappa \exp(h \kappa \exp(-\frac{n\delta}{3})) \exp(-\frac{n\delta}{3}) \nonumber \\ 
    & < h \kappa e \exp(-\frac{n\delta}{3}), \label{eq:h-syndrome-blocks}
\end{align}
where in the second to last inequality, we have used $(1 +  x )^h \leq e^{xh}$, and the last inequality uses the assumption $h \kappa \exp(-\frac{n\delta}{3}) \leq \frac{1}{e} < 1$. From Eq.~(\ref{eq:h-syndrome-blocks}), we have 
\begin{multline} \label{eq:syndrome-meas}
\| \left(\otimes_{i \in [h]} (\tr_{A_i} \circ \: \cU^i_{ec} \circ \tilde{\cT}_{\Phi^i}) \right) \left(\cV (\proj{\psi} \right) - \left(\otimes_{i \in [h]} (\tr_{A_i}  \circ \: \cU_{ec}^i \circ \overline{\cT}_{\Phi^i}) \right) \left(\cV (\proj{\psi}) \right)  \|_1  \\  < h\kappa e \exp(-\frac{n\delta}{3}). 
\end{multline}

\smallskip We now show that for any superoperator $\cV$, such that $|\cV|_\mathrm{red} \leq \beta S n + Rn$ with respect to all the code blocks $i \in [h]$, we have 
\begin{equation}  \label{eq:fin-ftec-1}
\left(\otimes_{i \in [h]} (\tr_{A_i}  \circ \: \cU_{ec}^i \circ \overline{\cT}_{\Phi^i}) \right) \left(\cV (\proj{\psi}) \right) = \cV'(\proj{\psi}),   
\end{equation}
where $|\cV'|_{\mathrm{red}} \leq \beta S n + 5n\delta 2^{d+1}$ is satisfied for all the code blocks $i \in [h]$. Then, from Eq.~(\ref{eq:syndrome-meas}) and Eq.~(\ref{eq:fin-ftec-1}), we get Eq.~(\ref{eq:circ-noise-reduce}).
 
\medskip First, note that the first $d-1$ layers of the syndrome extraction circuit $\Phi$ realize an isometry on the data qubits
\begin{equation} \label{eq:syndrome-d-1-map}
    \cT_{[d-1]}(\cdot) = U_{[d-1]} (\cdot \otimes \proj{0}^{\otimes l}) U_{[d-1]}^\dagger,
\end{equation} 
where $U_{[d-1]}$ is a unitary such that for any $\ket{\phi} \in (\mathbb{C}^2)^{\otimes n}$, we have
\begin{equation} \label{eq:proj-syndrome}
    \left(\ident_D  \otimes \bra{\sigma}_A \right)  U_{[d-1]} (\ket{\phi} \otimes \ket{0}_A^{\otimes l}) =  \Pi_\sigma \ket{\phi},
\end{equation}
where $\Pi_{\sigma} \in \bL((\mathbb{C}^2)^{\otimes n})$ is the projector on the space corresponding to the syndrome $\sigma \in \{0, 1\}^l$. The last layer of $\Phi$ realizes single qubit Pauli-$Z$ measurements on ancilla qubits; therefore, the corresponding quantum channel is given by, 
\begin{equation} \label{eq:syndrome-d-map}
\cT_{d} = (\otimes_{j \in D} \cI_j) \otimes \sum_{\sigma \in \{0, 1\}^l} \proj{\sigma} (\cdot) \proj{\sigma}.
\end{equation}

We now focus on error correction for a fixed code block $j \in [h]$. Consider data qubits $D_j$ and the ancilla qubits $A_j$ corresponding to this block.  Recall from Lemma~\ref{lem:pushing-errs} that 
\begin{equation}
\overline{\cT}_{\Phi^j} = \cT^j_d \circ \cN^j \circ \cT^j_{[d-1]},    
\end{equation}
where the superoperator $\cN^j$ acts on the joint system $D_j \cup A_j$ and its weight is given by $ 5 n \delta  2^{d+1}$. Therefore,  $\cN^j$ can be written as a linear combination of superoperators of the form $W (\cdot) W'$,  where $W, W'$ are operators acting on the joint system $D_j \cup A_j$, and their weight is $5n\delta 
 2^{d+1}$ (see also Def.~\ref{def:weight}). Here, for simplicity, we replace $\cN^j$ by  $W(\cdot)W'$.  Similarly, we replace $\cV$ by $E (\cdot) E'$, where $E, E'$ are the operators that act on the joint code blocks $\cup_{i \in [h]} D_i$, satisfying $|E|_\mathrm{red}, |E'|_\mathrm{red} \leq \beta Sn + Rn$, for any code block $i \in [h]$.

\smallskip For an $\be_x \in \{0, 1\}^{n + l}$ indicating a Pauli-$X$ error on the data and ancilla qubits in the $j^{th}$ code block, we define $\be_x := (\be_x|_{D_j}, \be_x|_{A_j})$, where $\be_x|_{D_j} \in \{0, 1\}^{n}$ indicates the $X$ error on the data qubits and $\be_x|_{A_j} \in \{0, 1\}^l$ on the ancilla qubits. Similarly, we use $\be_z := (\be_z|_{D_j}, \be_z|_{A_j})$ for Pauli-$Z$ errors. For clarity, we use the following notation 
\begin{equation}
\ket{U^j_{[d-1]}E\psi} := \left((\otimes_{i \neq j} \ident_{D_i})   \otimes U^j_{[d-1]} \right)(E\ket{\psi} \otimes \ket{0}^{\otimes l}_{A_j}),    
\end{equation}
where $U^j_{[d-1]}$ is a copy of $U_{[d-1]}$ in Eq.~\eqref{eq:syndrome-d-1-map} and Eq.~\eqref{eq:proj-syndrome}. Note that the quantum state $\ket{U^j_{[d-1]}E\psi}$ corresponds to the joint data-ancilla state after the $(d-1)^{th}$ layer of syndrome extraction applied only to the $j^{th}$ code block. For any measurement outcome $\sigma \in \{0, 1\}^l$ corresponding to ancilla qubit measurements in the $d^{th}$ layer of the syndrome extraction circuit, we have 
\begin{align}
& \left((\otimes_{i \neq j} \ident_{D_i})  \otimes (\ident_{D_j} \otimes \bra{\sigma}_{A_j})  W  \right) \ket{U^j_{[d-1]}E\psi} \nonumber\\
&= \sum_{\be_x, \be_z \in \{0, 1\}^{n + l}} W_{\be_x, \be_z} \: \left((\otimes_{i \neq j} \ident_{D_i})  \otimes \left( (\ident_{D_j}  \otimes \bra{\sigma}_{A_j}) Z^{\be_z} X^{\be_x}  \right)\right) \ket{U^j_{[d-1]}E\psi}  \nonumber \\
& =\sum_{\be_x, \be_z}  W_{\be_x, \be_z} \: \left((\otimes_{i \neq j} \ident_{D_i})  \otimes (Z^{\be_z|_{D_j}} X^{\be_x|_{D_j}}  \otimes \bra{\sigma} Z^{\be_z|_{A_j}} X^{\be_x|_{A_j}} )\right) \ket{U^j_{[d-1]}E\psi} \nonumber \\
& = \sum_{\be_x, \be_z} (-1)^{\be_z|_{A_j} \cdot \sigma} W_{\be_x, \be_z} \: \left((\otimes_{i \neq j} \ident_{D_i})  \otimes (Z^{\be_z|_{D_j}} X^{\be_x|_{D_j}}  \otimes \bra{\sigma}  X^{\be_x|_{A_j}})\right) \ket{U^j_{[d-1]}E\psi} \nonumber \\
& = \sum_{\be_z|_{D_j}, \be_x}  W_{\be_z|_{D_j}, \be_x} \: \left((\otimes_{i \neq j} \ident_{D_i})  \otimes ( X^{\be_x|_{D_j}} Z^{\be_z|_{D_j}} \otimes \bra{\sigma \oplus \be_x|_{A_j}} )\right) \ket{U^j_{[d-1]}E\psi} \nonumber  \\
& = \sum_{\be_z|_{D_j}, \be_x}  W_{ \be_z|_{D_j}, \be_x} \: \left((\otimes_{i \neq j} \ident_{D_i})  \otimes X^{\be_x|_{D_j}} Z^{\be_z|_{D_j}} \Pi_{(\sigma \oplus \be_x|_{A_j})} \right) E\ket{\psi} \label{eq:proj-syndrome-1}
\end{align}
where in the first equality, we expand the operator $W$ in the Pauli basis, that is, $W = \sum_{\be_x, \be_z \in \{0, 1\}^{n+l}} W_{\be_x, \be_z} Z^{\be_z} X^{\be_x}$, and in the fourth equality, we have used $W_{\be_z|_{D_j}, \be_x} :=  \sum_{\be_z|_{A_j}} (-1)^{\be_z|_{A_j} \cdot \sigma}  W_{\be_x, \be_z}$. In the last equality, we use Eq.~(\ref{eq:proj-syndrome}). As the weight of $W$ is $5n\delta 2^{d+1}$, it follows that the Hamming weights of $\be_x|_{D_j},\be_z|_{D_j}, \be_x|_{A_j},\be_z|_{A_j}$ are upper bounded by $5n\delta 2^{d+1}$.

\smallskip Applying the error correction $U_{ec}$ on the data qubits corresponding to $j^{th}$ code block, when the observed syndrome is $\sigma \in \{0, 1\}^l$, we get 
\begin{align}
  &\left((\otimes_{i \neq j} \ident_{D_i}) \otimes  U_{ec}(\ident_{D_j} \otimes \proj{\sigma}_{A_j})  W  \right)  \ket{U^j_{[d-1]}E\psi} \nonumber \\
  & = \sum_{\be_z|_{D_j}, \be_x}   W_{\be_z|_{D_j}, \be_x} \: \left((\otimes_{i \neq j} \ident_{D_i})  \otimes U_{ec}(\ket{\sigma}_{A_j} \otimes  Z^{\be_z|_{D_j}} X^{\be_x|_{D_j}}   \Pi_{(\sigma \oplus \be_x|_{A_j})}) \right) E\ket{\psi}  \nonumber\\
  & = \sum_{\be_z|_{D_j}, \be_x}   W'_{\be_x, \be_z|_{D_j}} \: \left((\otimes_{i \neq j} \ident_{D_i}) \otimes Z^{\be_z|_{D_j}} X^{\be_x|_{D_j}}   U_{ec} (\ket{\sigma}_{A_j} \otimes \Pi_{(\sigma \oplus \be_x|_{A_j})}) \right) E\ket{\psi}  \nonumber\\
   & = \ket{\sigma}_{A_j} \otimes \left(\sum_{\be_z|_{D_j}, \be_x}  W'_{\be_x, \be_z|_{D_j}} \: \left((\otimes_{i \neq j} \ident_{D_i})  \otimes  Z^{\be_z|_{D_j}} X^{\be_x|_{D_j}} \right)  (E_{\neq j} \otimes  X^{\be'_x} Z^{\be'_z} )\ket{\psi} \right), \label{eq:ec-map}
\end{align}
where the first equality uses Eq.~(\ref{eq:proj-syndrome-1}), in the second equality, we use that $U_{ec}$ is a controlled Pauli, and $W'_{\be_x, \be_z|_{D_j}} = \pm \: W_{\be_x, \be_z|_{D_j}}$. In the last equality, from Corollary~\ref{cor:single-shot-dec}, $E_{\neq j} = \frac{1}{2^n} \sum_{\overline{\be}: \sigma_{\overline{\be}} = \sigma \oplus \be_x|_{A_j}} \tr_{D_j}(E((\otimes_{i \neq j} \ident_{D_i}) \otimes Z^{\overline{\be}_z} X^{\overline{\be}_x} ))$, and for $\be = (\be'_x, \be'_z)$ $|\be'|_{\mathrm{red}} \leq \beta S n$. Consider the total error operator in Eq.~(\ref{eq:ec-map})
\begin{equation}
    \overline{E} := \sum_{\be_x, \be_z|_{D_j}}  W'_{\be_x, \be_z|_{D_j}} \: \left((\otimes_{i \neq j} \ident_{D_i})  \otimes X^{\be_x|_{D_j}} Z^{\be_z|_{D_j}} \right)  (E_{\neq j} \otimes X^{e'_x} Z^{e'_z} )
\end{equation}
The stabilizer reduced weight of $\overline{E}$ with respect to the $j^{th}$ code block is upper bounded by $\beta Sn + 5 n \delta 2^{d+1}$. The reduced weight of $\overline{E}$ for code blocks $i \neq j$ remains upper bounded by $\beta Sn + Rn$. Similarly, after doing error correction with respect to all the code blocks, the residual error operator $\overline{E}$, satisfies $|\overline{E}|_{\mathrm{red}} \leq \beta Sn + 5 n \delta 2^{d+1}$ with respect to all the code blocks $i \in [h]$. Similarly, Eq.~(\ref{eq:ec-map}) also holds for the error operators $W'$ and $E'$; therefore, it follows that 
\begin{align} 
    & (\otimes_i (\tr_{A_i} \circ  \: \cU^i_{ec} \circ \overline{\cT}_{\Phi^i})) \circ \cV (\proj{\psi})  = \cV'(\proj{\psi}) := \sum_{\overline{E}, \overline{E}'} \overline{E} \proj{\psi} \overline{E}', \label{eq:fin-ftec}
\end{align}
where $|\cV'|_\mathrm{red} \leq \beta Sn +  5 n \delta 2^{d + 1}$,  with respect to all the code blocks. Finally, from Eq.~(\ref{eq:syndrome-meas}) and Eq.~(\ref{eq:fin-ftec}), we get Eq.~(\ref{eq:circ-noise-reduce}).  \hfill$\square$

\section*{Acknowledgements}
We would like to thank Paula Belzig, Alexander M\"uller-Hermes, Robert K\"onig, Thomas Theurer and Freek Witteveen for useful discussions.
MC and AG acknowledge financial support from the European Research Council (ERC Grant Agreement No.~818761), VILLUM FONDEN via the QMATH Centre of Excellence (Grant No.~10059) and the Novo Nordisk Foundation (grant NNF20OC0059939 ‘Quantum for Life’).  Part of this work was completed while MC was Turing Chair for Quantum Software, associated to the QuSoft research center in Amsterdam, acknowledging financial support by the Dutch National Growth Fund (NGF), as part of the Quantum Delta NL visitor programme.
OF acknowledges financial support from the European Research Council (ERC Grant, Agreement No.~851716) and from a government grant managed by the Agence Nationale de la Recherche under the Plan France 2030 with the reference ANR-22-PETQ-0006. We also thank the National Center for Competence in Research SwissMAP of the Swiss National Science Foundation and the Section of Mathematics at the University of Geneva for their hospitality.

\printbibliography
\end{document}